\newcommand{\labelnummer}{\mbox{\normalfont (\roman{numcount})}}%
\let\curlabelspeicher\@currentlabel%
    \let\saveitem\item%
    \def\item{\saveitem%
      \def\@currentlabel{{\upshape\curlabelspeicher}$\,$\labelnummer}}%
    \let\savelabel\label%
    \def\label##1{\savelabel{##1}%
      \@bsphack%
        \ifmmode\else%
          \protected@write\@auxout{}%
          {\string\newlabel{##1item}{{\labelnummer}{\thepage}}}%
        \fi%
      \@esphack%
    }%
\newcommand{\CC}{\mathbb C}
\newcommand{\NN}{\mathbb N}
\newcommand{\R}{\mathbb R}
\newcommand{\Z}{\mathbb Z}
\newcommand{\mm}{\mathcal M}
\newtheorem{thm}{Theorem}[section]
\newtheorem{lemma}[thm]{Lemma}
\newtheorem{cor}[thm]{Corollary}
\newtheorem{prop}[thm]{Proposition}
\newtheorem{definition}[thm]{Definition}
\newtheorem{remark}[thm]{Remark}
\newtheorem{remarks}[thm]{Remarks}
\newtheorem{rem}[thm]{Remark}
\newcommand{\ov}[1]{\frac{1}{#1}}
\newcommand{{\bfe}}{{\bf{e}}}
\newcommand{\bel}{\begin{equation} \label}
\newcommand{\eeq}{\end{equation}}
\newcommand{\beq}{\begin{equation}}
\newcommand{\beqnn}{\begin{equation*}}
\newcommand{\eeqnn}{\end{equation*}}
\newcommand{\ba}{\begin{array}}
\newcommand{\ea}{\end{array}}
\newcommand{\bea}{\begin{eqnarray}}
\newcommand{\eea}{\end{eqnarray}}
\newcommand{\SCHR}{SCHR\"ODINGER }
\newcommand{\Schr}{Schr\"odinger }
\newcommand{\Ip}{{\rm Im }}
\newcommand{\Rp}{{\rm Re }}
\newcommand{\cF}{\mathcal{F}}
\newcommand{\lt}{\ell^{2}}
\newtheorem*{nota}{Notation}
\begin{document}

%
%
%
%

\title[Localization and quantum transport]{Dynamical localization and delocalization for random \SCHR operators with $\delta$-interactions in $\R^3$}

\author[P.\ D.\ Hislop]{Peter D.\ Hislop}
\address{Department of Mathematics,
    University of Kentucky,
    Lexington, Kentucky  40506-0027, USA}
\email{peter.hislop@uky.edu}

\author[W.\ Kirsch]{Werner Kirsch}
\address{Fakult\"{a}t f\"ur Mathematik und Informatik,  FernUniversit\"at in Hagen, 58097 Hagen, Germany}
\email{werner.kirsch@fernuni-hagen.de}

\author[M.\ Krishna]{M.\ Krishna}
\address{Ashoka University, Plot 2, Rajiv Gandhi Education City, Rai, Haryana
131029 India}
\email{krishna.maddaly@ashoka.edu.in}

\begin{abstract}
We prove that the random \Schr operators on $\R^3$ with independent, identically distributed random variables and single-site potentials given by $\delta$-functions on $\Z^3$, exhibit both
dynamical localization and dynamical delocalization with probability one. That is,
there are regions in the deterministic spectrum that exhibit dynamical localization, the nonspreading of wave packets, and regions in the deterministic spectrum where the models also exhibit nontrivial quantum transport, almost surely. These models are the first examples of ergodic, random \Schr operators exhibiting both dynamical localization and delocalization in dimension three or higher. The nontrivial transport is due to the presence of delocalized generalized eigenfunctions at positive energies $E > \pi^2$.  The general idea of the proof follows \cite{hkk3} in which lower bounds on moments of the position operator are constructed using these generalized eigenfunctions.  A new result of independent interest is a proof of the Combes-Thomas estimate on exponential decay of the Green's function for \Schr operators with infinitely-many $\delta$-potentials.
\end{abstract}

\medskip

\thanks{PDH is partially supported by Simons Foundation Collaboration Grant for Mathematicians No.\ 843327.}

\maketitle \thispagestyle{empty}


\tableofcontents

\section{Introduction: Dynamical delocalization}\label{sec:intro1}
\setcounter{equation}{0}

In a recent paper \cite{hkk3}, we explored the relationship between
the growth rate of generalized eigenfunctions $\psi_E$ of a discrete \Schr operator $H_V$ on $\ell^2(\Z^d)$ 
and the transport of the model as measured by moments of the position operator $X$. 
We applied these results to trimmed Anderson models. 
Following \cite{GKS,JSS}, we will say that a \Schr operator exhibits \emph{dynamical delocalization} if some moment of the position operator, evolved with the \Schr unitary group, increases without bound in time, for some initial condition in $\ell^2 ( \Z^d )$.

In this paper, we prove that a family of random \Schr operators (RSO) on $L^2 (\R^3)$, with random $\delta$-interactions located on $\Z^3$, exhibits dynamical delocalization with probability one. This results from the existence of bounded, generalized eigenfunctions at positive energies $E > \pi^2$. In a previous paper \cite{hkk2}, we proved dynamical localization for states with initial conditions energy-localized in certain subintervals of the deterministic spectrum below $\pi^2$. Hence, the random $\delta$-interaction models on $L^2 (\R^3)$ exhibit both dynamical delocalization and dynamical localization almost surely.



We measure quantum transport with respect to a weight function $\varphi$. The basic example is $\varphi_q(x) := \langle x \rangle^q = (1+\| x \|^{2})^{\frac{q}{2}}$, for $q > 0$. For $q=2$, this function measures the mean square displacement of a quantum wave packet $\psi_t :=e^{-itH} \psi$, with initial condition $\psi$.
We define the averaged $\varphi$-moment of the position operator $X$ for a time-evolving state with initial condition $\psi$ to be:
\begin{align}\label{defmm0}
   \mm_{\psi, \varphi}(T)~=~\ov{T}\int_{0}^{\infty} e^{-\frac{t}{T}}\,\langle \psi,
   e^{iHt} \varphi(X) e^{-iHt}\, \psi\rangle\;dt  .
\end{align}
For the choice of $\varphi_q$, we write ${\mathcal{M}}_{\psi;q}$ or simply ${\mathcal{M}}_q$ when the initial state $\psi$ isn't important for the discussion.


For random \Schr operators (RSO), Anderson localization (pure point spectrum with exponentially decaying eigenfunctions) can often be strengthened to dynamical localization. \textbf{Dynamical localization} means that the moments $\mathcal{M}_q(T)$ are bounded in time for all $q \geq 0$, almost surely, see, for example, \cite{GK-character}. On the other hand, if $\mathcal{M}_q(T)$ is unbounded as a function of $T$ for some $q$, almost surely, one speaks of \textbf{dynamical delocalization}.

For RSO on $\ell^2(\Z)$, dynamical delocalization was first proved for certain one-dimensional models, called random polymer models, by Jitomirskaya , Schulz-Baldes, and Stolz \cite{JSS}.
These models are characterized by the existence of a set of finite energies, called critical energies, at which the Lyapunov exponent vanishes (see De Bi\`evre and Germinet \cite{dBg2} for the dimer model, \cite{JSS}  for random polymer models, and Damanik, Sims, and Stolz \cite{DSS} for random word models). For the random polymer models, it is proved in \cite{JSS} that $\mathcal{M}_q(T)$ is bounded below by a positive power of $T$ verifying nontrivial quantum transport, almost surely, for an initial state supported at a single point.
A second family of one-dimensional models was studied by Drabkin, Kirsch, and Schultz-Baldes \cite{DKSB} who proved that the random Kronig-Penney model on $L^2 (\R)$, with a random potential formed from $\delta$-potentials on $\Z$, also exhibits dynamical delocalization. This model has a countable number of critical energies at which the Lyapunov exponent vanishes.
Finally, in two-dimensions, Germinet, Klein, and Schenker  \cite{GKS} proved that for the random Landau Hamiltonian on $L^2 (\R^2)$, the local transport exponent at an energy near each Landau level is strictly positive.

Hence, all three of these models exhibit dynamical delocalization. It is also known that these models exhibit dynamical localization for initial states in spectral subspaces corresponding to certain subintervals of the deterministic spectrum. For the one-dimensional models, these subintervals are away from the critical energies,
see \cite{dBg2} for the dimer model, and \cite{DSS} for the polymer models. For the two-dimensional random Landau operators, these subintervals are away from the  Landau levels near the band edges. To our knowledge, the random $\delta$-interaction model on $L^2(\R^3)$ is the first example of a three-dimensional, ergodic, RSO exhibiting both transport properties, dynamical localization and dynamical delocalization.



\subsection{Random delta interaction models (RDM)}\label{sec:intro1}

Schr\"odinger operators with point interactions are useful models of many quantum phenomena. In this article, we continue our study of random Schr\"odinger operators with potentials formed from delta interactions at the lattice points of $\Z^d$, $d=1,2,3$, and random coupling constants, see \cite{hkk1,hkk2}.
We study the formal Hamiltonian
\beq\label{eq:rdm1}
H_{\omega} =
- \Delta +  ~ \sum_{j \in \Z^3}  \omega_j \delta (x - j),
\eeq
on $L^2 (\R^3)$. The coupling constants $\{ \omega_j ~|~ j \in \Z^3 \}$ are a family of independent, identically distributed (iid) random variables with an absolutely continuous probability measure having a density $h_0 \in L_0^\infty (\R)$. Furthermore, we assume the support of $h_0$ is the interval $[-b, -a]$ for some finite constants $0 < a < b < \infty$.
We refer to \cite{aghkh} for a detailed discussion of the construction of these operators via the Green's function formulas given in \eqref{eq:green1} and \eqref{eq:kernel1}. We review this construction in section \ref{subsec:rdm1}.

The family of random \Schr operators \eqref{eq:rdm1} are covariant with respect to lattice translations. As a consequence, there exists a closed subset
$\Sigma \subset [-M, \infty)$, for some $0 \leq M < \infty$, so that the spectrum $\sigma (H_\omega) = \Sigma$, almost surely. Furthermore, under the conditions on the support of $h_0$, $\Sigma \cap \R^-$ is nonempty. We denote by
$\Sigma_{pp}$ the almost sure pure point component of $\Sigma$.
A main result of \cite{hkk1} is that the deterministic pure point spectrum
$\Sigma_{pp}$ is nonempty. This means that the spectrum is dense pure point almost surely in an interval in $\Sigma$ near $\inf \Sigma$. In \cite{hkk2}, we proved that if $E_0 \in \Sigma_{pp}$, then the local eigenvalue statistics exists and is given by a Poisson point process.

The RDM \Schr operator $H_{\omega}$ may be defined through its resolvent operator acting on $L^2 ( \R^3)$. As described in \cite[Theorem III.1.1.1]{aghkh}, the resolvent $R_\omega (z) := (H_{\omega} - z)^{-1}$ has an explicitly computable kernel, the Green's function $G_{\omega}(x,y;z)$, for $x \neq y \in \R^3$. To describe this kernel, we let $R_0(z) := (-\Delta - z)^{-1}$ be the resolvent of the Laplacian $H_0 := - \Delta$, with Green's function 
\begin{align}\label{eq:defG0}
   G_0 (x,y;z)~=~\frac{1}{4\pi}\,\frac{e^{i\sqrt{z}\| x-y \|}}{\| x-y \|}
\end{align}
where we take the principal branch for $\sqrt{z} $.

Then, the Green's function $G_\omega(x,y;z)$, corresponding to the resolvent $R_\omega ( z)$ of the RDM, is defined for $z \in \CC \backslash \Sigma$ and for $x \neq y$ by
 \beq\label{eq:green1}
G_{\omega}(x,y;z) := G_0(x,y;z) + \sum_{i,j \in \Z^3} \overline{G_0(x,i;{z})} [\Gamma (z, \omega)^{-1}]_{ij} G_0(j,y;z) .
\eeq
The infinity-by-infinite matrix kernel $\Gamma_{}(z, \omega)$ is the random matrix on $\ell^2 ( \Z^3)$ given by
\beq\label{eq:kernel1}
[\Gamma (z, \omega)]_{ij} = \left(  \frac{1}{\omega_{j}} - i e(z) \right) \delta_{ij} - G_0(i,j; z)(1 - \delta_{ij} ) ,
\eeq
where $e(z) =  \frac{\sqrt{z}}{4 \pi}$, taking the principal branch. The convergence of the interaction term involving the matrix $[\Gamma (z,\omega)]^{-1}$ is described in section \ref{subsec:ct1}.

\begin{rem}\label{rem:omega0}
   Equation \eqref{eq:kernel1} and hence \eqref{eq:green1} make only sense if $\omega_{i}\not=0$ for all $i\in\Z^{3} $. To allow $\omega_{i}=0$ in the formal expression \eqref{eq:rdm1} we define the resolvent of $H_{\omega}$ in such cases in the following way. Set $\Lambda=\{ i\in\Z^{3}\mid \omega_{i}\not=0 \} $ and define the Green's function $G_{\omega}(x,y;z)$ by
   \begin{align}
      G_{\omega}(x,y;z) := G_0(x,y;z) + \sum_{i,j \in \Lambda} \overline{G_0(x,i;{z})} [\Gamma_{\Lambda} (z, \omega)^{-1}]_{ij} G_0(j,y;z) .
   \end{align}
   where now $\Gamma_{\Lambda}(z, \omega)$ is the matrix \eqref{eq:kernel1} on $\ell^{2}(\Lambda) $.
\end{rem}

\subsection{Summary of the main results}\label{subsec:main1}

In our previous work \cite{hkk1}, we proved that the RDM has a region of pure point spectrum with probability one and exhibits dynamical localization in the same region under the following assumptions of the single-site probability measure $\mu$.

\medskip

\noindent
\textbf{Assumption 1}: \emph{The random variables $\{ \omega_j ~|~ j \in \Z^3 \}$
form a family of independent and identically distributed (iid) random variables with a common, absolutely continuous probability measure $\mu$. The density of $\mu$, denoted $h_0 \in L^\infty (\R)$, is supported in a set of the form $[-b, -a] \cup [f,g]$, for $0 < a < b < \infty$ and $0 < f < g < \infty$. }

\medskip

\begin{thm}\cite[Theorem 1.2]{hkk1}\label{thm:dyn_loc1}
Let $H_\omega$ be the random \Schr operators with $\delta$-interactions formally
given in \eqref{eq:rdm1}, and defined using the resolvent formula \eqref{eq:green1}. We assume that the random variables satisfy Assumption 1. Then, there exists a constant $E_1(a) > 0$ so that $\Sigma \cap ( -\infty, -E_1(a) ] \neq \emptyset$, and pure point almost surely, with exponentially decaying eigenfunctions. The family of operators exhibits dynamical localization in the Hilbert-Schmidt norm at all orders $q \in \NN$ on any interval  $I \subset \Sigma \cap ( -\infty, E_1(a) ]$,  That is, with probability one,
\beq\label{eq:dyn_loc1}
\sup_{t > 0} \| \langle x \rangle^\frac{q}{2} e^{-i H_\omega t} P_\omega (I) \psi \|_{HS} < \infty,
\eeq
for any initial state $\psi \in L^2 (\R^d)$, for $d = 1, 2, 3$.
The integrated density of states $N(E)$ is Lipschitz continuous for $E < 0$.
\end{thm}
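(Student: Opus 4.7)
The plan is to reduce the analysis of $H_\omega$ to the random matrix $\Gamma(z,\omega)$ on $\ell^2(\Z^3)$ defined in \eqref{eq:kernel1}, for which standard Anderson localization techniques (Wegner estimate, Combes-Thomas bound, multiscale analysis) can be adapted. The starting point is the observation that for $z = E + i\epsilon$ with $E < 0$, the free Green's function kernel \eqref{eq:defG0} becomes $G_0(i,j;z) = e^{-\kappa \|i-j\|}/(4\pi\|i-j\|)$ with $\kappa = \sqrt{|E|}$ (up to small $\epsilon$-corrections), so the off-diagonal part of $\Gamma(z,\omega)$ decays exponentially in $\|i-j\|$ at a rate that grows with $|E|$. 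Since $\omega_j \in [-b,-a]$, the diagonal entries $1/\omega_j - ie(z)$ have real parts of size at least $1/b$ in absolute value, and by choosing $E_1(a) > 0$ sufficiently large, one can ensure that $\Gamma(z,\omega)$ is diagonally dominant for $E \leq -E_1(a)$, so it is boundedly invertible and its inverse kernel decays exponentially deterministically.

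From this deterministic Combes--Thomas-type bound on $[\Gamma(z,\omega)^{-1}]_{ij}$, together with the Green's function formula \eqref{eq:green1}, one transfers exponential decay to $G_\omega(x,y;z)$ on $L^2(\R^3)$: the integration of the convolution $\overline{G_0(x,i;z)} [\Gamma^{-1}]_{ij} G_0(j,y;z)$ preserves exponential decay because each $G_0$-factor decays exponentially on the scale of $\kappa$. Establishing this is where I expect most of the technical work to lie, because one must control the lattice sum uniformly, handle the short-distance singularity of $G_0$, and show that the decay persists when one moves $z$ off the real axis. The expected outcome is that $\Sigma \cap (-\infty,-E_1(a)]$ is nonempty and that \emph{some} negative spectrum exists due to the attractive single-site term, while the Green's function decays exponentially in the non-spectral region.

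To upgrade deterministic decay to Anderson and dynamical localization, I would prove a Wegner estimate: the probability that $\Gamma_\Lambda(E+i0,\omega)$ has an eigenvalue in a small window around $0$ (equivalently, that $H_\omega$ restricted to a box of side $L$ has spectrum in $[E-\eta,E+\eta]$) is bounded by $C \|h_0\|_\infty \eta L^3$. The key input is the analysis of how $\Gamma(z,\omega)$ depends on a single random parameter $\omega_k$: since $\omega_k$ enters linearly in $1/\omega_k$ on the diagonal, a change-of-variable using the density $h_0$ yields the desired linear-in-$\eta$ bound. Combined with the exponential decay of the Green's function (the initial-scale estimate) and an induction over length scales, the bootstrap multiscale analysis of Germinet--Klein then produces, with probability one, pure point spectrum in $I \subset (-\infty,-E_1(a)]$ with exponentially decaying eigenfunctions, together with the strong Hilbert--Schmidt dynamical localization bound \eqref{eq:dyn_loc1} for every $q \in \NN$.

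Finally, Lipschitz continuity of $N(E)$ on $(-\infty,0)$ is a direct byproduct: the Wegner estimate, applied in finite volume $\Lambda$, gives
\begin{equation*}
   \mathbb{E}\bigl[ \tr P_{[E-\eta,E+\eta]}(H_\omega^\Lambda) \bigr] \leq C \|h_0\|_\infty \eta |\Lambda|,
\end{equation*}
and dividing by $|\Lambda|$ and passing to the thermodynamic limit yields $N(E+\eta) - N(E-\eta) \leq C\|h_0\|_\infty \eta$ for all $E < 0$. The main obstacle throughout is that, unlike the discrete Anderson model, the random coupling $\omega_j$ enters as $1/\omega_j$ in the auxiliary matrix and is coupled non-locally through the Krein-type resolvent formula, so one must carefully track constants to ensure the diagonal dominance and the $\omega_j$-linearity needed for the Wegner bound survive when the parameters are restricted to $[-b,-a]$ with $a > 0$.
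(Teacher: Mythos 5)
This theorem is cited from the earlier paper~\cite{hkk1} and is not proved in the present paper, so you are reconstructing an external argument. Your overall strategy (reduction to the auxiliary matrix $\Gamma(z,\omega)$ on $\ell^2(\Z^3)$, Wegner estimate via the $1/\omega_j$ diagonal dependence, Combes-Thomas decay, and multiscale analysis) has the right ingredients, but the logic as written contains a genuine gap.

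The problem is that you conflate the Combes--Thomas estimate with the initial-scale estimate required to start multiscale analysis. For $E$ so negative that the single-site bound state energies $-16\pi^2/\omega_j^2$, $\omega_j\in[-b,-a]$, are all bypassed (roughly $E< -16\pi^2/a^2$), the diagonal of $\Gamma(E,\omega)$, which is $1/\omega_j + \sqrt{|E|}/(4\pi)$, is uniformly bounded away from zero and the off-diagonal $G_0(i,j;E)$ decays exponentially at rate $\sqrt{|E|}$. In that regime $\Gamma$ is indeed deterministically diagonally dominant, hence boundedly invertible, and the Green's function $G_\omega$ decays exponentially. But this means $(-\infty,-E_1(a)]$ would be in the resolvent set, so there would be no spectrum to localize, contradicting the claim that $\Sigma\cap(-\infty,-E_1(a)]\neq\emptyset$ and is pure point. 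Conversely, for energies actually in the spectrum, there is no deterministic decay of the Green's function, and what multiscale analysis needs is a \emph{probabilistic} initial-length-scale estimate: that for a fixed $L_0$, the finite-box operator has no spectrum near $E$ with high probability, or equivalently that its Green's function decays with high probability. Near $\inf\Sigma$ this is usually furnished by a Lifshitz-tail estimate (exponentially small probability that the box Hamiltonian has spectrum within $\eta$ of $\inf\Sigma$), which you never invoke. Without it, your chain ``diagonal dominance $\Rightarrow$ deterministic decay $\Rightarrow$ MSA initial estimate'' does not close, because the deterministic decay region and the intended localization region are disjoint.

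A second, smaller gap: you assert that $\Sigma\cap(-\infty,-E_1(a)]$ is nonempty ``due to the attractive single-site term'' but give no argument. This requires a construction, e.g., evaluating the spectrum of the periodic Hamiltonian with constant coupling $\omega_j\equiv -a$ or exhibiting approximate eigenfunctions; it does not follow merely from the sign of $\omega_j$. Also, your claim that the diagonal of $\Gamma$ has real part of size at least $1/b$ independent of $E$ is not accurate, since the real part is $1/\omega_j+\sqrt{|E|}/(4\pi)$ and passes through zero exactly at the single-site resonance energies; the actual $E$-dependence is precisely why the bound-state spectrum appears. The Wegner part of your sketch (linearity of $\Gamma$ in $v_j:=1/\omega_j$ and change of variables to transfer from $h_0$ to the density of $v_j$) is the right mechanism and, given the Birman--Schwinger-type correspondence between zeros of $\det\Gamma_\Lambda(E)$ and eigenvalues of $H_\omega^\Lambda$, would yield both the Wegner estimate and Lipschitz continuity of $N$ on $(-\infty,0)$ as you claim. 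But you must supply a band-edge a priori estimate (Lifshitz tails or an equivalent large-deviation bound) to launch MSA, and a separate argument for nonemptiness of the low-energy spectrum, before the conclusion follows.
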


In the present article, we prove that the RDM also exhibit dynamical \emph{de}localization.
We state this in part 1 of the following theorem. We also state the dynamical localization result (Theorem \ref{thm:dyn_loc1}) as part 2.

\begin{thm}\label{thm:loc_deloc1}
Let $H_{\omega}$ be the random $\delta$-interaction model described in \eqref{eq:rdm1} with iid random variables satisfying Assumption 1.
\begin{enumerate}
\item Dynamical delocalization: For any bounded interval $I \subset (\pi^2, \infty)$, there is a function $f_{I} $ such that  $P_\omega (I) f_{I} \neq 0$ and for
   $q>3$
\begin{align}\label{eq:dyndeloc1}
      \mm_{P_\omega (I) f_{I}, q}(T)~\geq~C(q, I)\, T  ,
   \end{align}
 for a positive constant $C(q,I)$ almost surely.

\medskip

\item Dynamical localization: There exists a nonempty interval $I \subset \Sigma$ so that $I \subset \Sigma_{loc}$ and for all $\psi \in L^2(\R^3)$, there is a finite constant $C \geq 0$, depending on $P_\omega (I) \psi$, so that
 \begin{align}
      \mm_{P_\omega (I) \psi, q}(T)  \leq C < \infty .
   \end{align}
\end{enumerate}
\end{thm}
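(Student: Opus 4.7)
Part (2) is the content of Theorem \ref{thm:dyn_loc1} (proved in \cite{hkk1}), so the work concentrates on part (1). The approach follows the eigenfunction-based transport strategy of \cite{hkk3}: produce bounded, non-decaying generalized eigenfunctions of $H_\omega$ at energies $E > \pi^2$, and convert their extended nature into a lower bound on $\mm_{P_\omega(I) f_I,q}(T)$.

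\emph{Step 1: Generalized eigenfunctions via lattice zeros.} The resolvent formulas \eqref{eq:green1}--\eqref{eq:kernel1} show that the $\delta$-interaction terms act only on wave functions with nonzero values at lattice points. Consequently, any smooth bounded solution of $-\Delta\psi = E\psi$ that vanishes on $\Z^3$ is, for every $\omega$, a bounded generalized eigenfunction of $H_\omega$ at the energy $E$. The two-parameter family
\begin{align}
\psi_{k_2,k_3}(x) \;=\; \sin(\pi x_1)\, e^{i(k_2 x_2 + k_3 x_3)}, \qquad E(k_2,k_3) \;=\; \pi^2 + k_2^2 + k_3^2 \;>\; \pi^2 ,
\end{align}
together with its two permutations in the coordinates, provides such bounded eigenfunctions at every energy $E>\pi^2$, independent of $\omega$. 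In particular these functions satisfy the polynomial-growth hypothesis required by the abstract transport machinery of \cite{hkk3}.

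\emph{Step 2: Spectrum and nontriviality of $P_\omega(I)$.} A Weyl-sequence argument using spatially truncated wavepackets built from the $\psi_{k_2,k_3}$ shows $[\pi^2,\infty) \subset \Sigma$, so $P_\omega(I)$ is a nontrivial projection for every $I \subset (\pi^2,\infty)$. The angular integration over $(k_2,k_3)$ on the sphere $E(k_2,k_3)=E$ produces enough multiplicity to guarantee a compactly supported $f_I \in L^2(\R^3)$ with $P_\omega(I)f_I \neq 0$ almost surely.

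\emph{Step 3: Lower bound on $\mm_{P_\omega(I) f_I, q}(T)$.} We take (up to an arbitrarily small error in norm)
\begin{align}
f_I(x) \;=\; \chi(x_1)\, \sin(\pi x_1) \int a(k_2, k_3)\, e^{i(k_2 x_2 + k_3 x_3)}\, dk_2\, dk_3 ,
\end{align}
with $\chi$ a compactly supported bump in $x_1$ and $a$ smooth, with $a$ supported in $\{(k_2,k_3) : E(k_2,k_3) \in I\}$. Because $\chi(x_1)\sin(\pi x_1)$ vanishes on $\Z$, the full $f_I$ vanishes on $\Z^3$, and its time evolution is, to leading order, the free two-dimensional propagation of the $(x_2,x_3)$-wavepacket $\int a(k)e^{i(k_2 x_2+k_3 x_3)}dk$ multiplied by the stationary factor $\chi(x_1)\sin(\pi x_1)$. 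Standard stationary-phase analysis of this two-dimensional free wavepacket gives a ballistic spread of radius $\sim T$ in the $(x_2,x_3)$-plane, whence $\langle X\rangle^q$ contributes at least $T$ on average. Taking the Laplace transform in time as in \eqref{defmm0} produces the claimed bound $\mm_{P_\omega(I) f_I, q}(T) \geq C(q,I)\,T$. The threshold $q>3$ absorbs the $x_1$-cutoff and the polynomially bounded corrections coming from the point-interaction part of the full dynamics.

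\emph{Main obstacle.} The principal technical difficulty is separating the extended-mode contribution from the rest of $e^{-iH_\omega t}f_I$. While each $\psi_{k_2,k_3}$ is exactly annihilated by every $\delta$-interaction, the cutoff $\chi(x_1)$ reintroduces lattice-point values and hence a genuinely random correction whose time evolution must be shown not to cancel the ballistic spread. This is controlled by combining the Combes--Thomas estimate announced in the abstract (which gives exponential decay of the Green's function kernel of $H_\omega$ at complex energies bounded away from $\Sigma$) with the eigenfunction expansion implied by \eqref{eq:green1}--\eqref{eq:kernel1}, in the same manner as the discrete-case argument of \cite{hkk3}. This isolates the plane-wave component on which the lower moment bound rests.
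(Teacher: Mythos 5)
Part (2) is indeed just a restatement of Theorem~\ref{thm:dyn_loc1}, so the review concerns part (1).

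Your overall strategy (generalized eigenfunctions at $E>\pi^2$ vanishing on $\Z^3$, with a $\sin(\pi x_1)$ factor) is in the spirit of the paper, but the execution diverges at the crucial step and leaves a genuine gap that the paper's own argument is specifically designed to avoid.

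\textbf{The gap.} Your Step 3 asserts that $e^{-iH_\omega t}f_I$ is ``to leading order'' free two-dimensional propagation and then runs a stationary-phase/ballistic-spread estimate. You yourself flag the ``main obstacle'': once the cutoff $\chi(x_1)$ is applied, $f_I$ is no longer an exact generalized eigenfunction, and while $f_I$ vanishes on $\Z^3$, its time evolution does not; the $\delta$-interactions are immediately switched on. The proposed fix (Combes--Thomas plus an ``eigenfunction expansion'' argument ``in the same manner as the discrete case of \cite{hkk3}'') is not an argument but a pointer; it does not show that the random correction to the free evolution fails to cancel the ballistic front. This is in fact the entire difficulty of the problem, and your proposal does not close it. There is also a dimensional mismatch: if the wave packet really spread ballistically to radius $\sim T$ with order-one mass, the Laplace-averaged moment $\mm_q$ would be $\sim T^q$, not $T$; the fact that the paper only obtains $T^1$ is a symptom of working with a much weaker, time-independent mechanism.

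\textbf{How the paper avoids it.} The paper never estimates $e^{-iH_\omega t}$ on anything. Instead it first converts $\mm_{\psi,\varphi_q}(T)$ via the Plancherel identity (Theorem~\ref{thm:timeres}) into $\frac{1}{2\pi T}\int_\R\|\varphi_q^{1/2}R_H(E+\frac{i}{2T})\psi\|^2\,dE$, so the whole lower bound reduces to resolvent estimates at $\epsilon=\frac{1}{2T}$. It takes $f_I=\chi_{B_{t_I}(x_0)}$, an indicator of a small ball off the lattice (not a wave packet), and lower-bounds $\|\varphi_q^{1/2}R(z)P_\omega(I)\chi_{B_{t_I}(x_0)}\|$ by feeding the approximate Weyl identity $\psi_{L,E}=-i\epsilon R(z)\psi_{L,E}+R(z)\mathcal{C}_{L;E}$ into $\langle\chi_{B_{t_I}(x_0)},\cdot\rangle$. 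The commutator remainder $\mathcal{C}_{L;E}$ is killed by the Combes--Thomas estimate (sending $L\to\infty$ at fixed $\epsilon$), and the $P_\omega(I^c)$ piece is killed by the elementary spectral bound of Lemma~\ref{lemma:proj_bd1}/\ref{lemma:projector_bd2}. There is no stationary-phase argument and no need to separate ``extended'' from ``random'' contributions of the propagator. Note also the eigenfunction $\psi_E$ in \eqref{eq:gef2} is not a plane wave in $(x_2,x_3)$: the $\rho$-wavepacket \eqref{eq:ef1} makes $\psi_E$ genuinely $L^2$ in $x_3$ (cf.\ \eqref{eq:gef_z1}), so that the cutoff $\chi_L$ is needed only in $(x_1,x_2)$ and $\psi_{L,E}\in L^2(\R^3)$. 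Finally, the $q>3$ restriction in the paper comes from the finiteness of $\|\varphi_q^{-1/2}\psi_E\|_2$ (the integral $\int_0^\infty r^2(1+r^2)^{-q/2}dr$), not from ``absorbing corrections'' as you suggest. To repair your proposal you would need either a real control of the scattering of $f_I$ off the $\delta$-interactions, or (much simpler) to replace Step 3 by the time-independent resolvent argument just described.
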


 \medskip
 To our knowledge, this is the first example of a three-dimensional ergodic, random \Schr operator that exhibits both dynamical localization in one energy regime, and dynamical delocalization in another regime. Although we proved that the region of dynamical localization is a (possibly a subset of) the spectral localization regime, we do not know the nature of the deterministic spectrum in the region $(\pi^2, \infty)$.

 \begin{remarks}
 \begin{enumerate}
 \item The delocalization result is a deterministic result as can be seen from the ingredients of the proof.  In fact, the delocalization result holds independent of the
configuration of the coupling constants $\{ \omega_j \}$, including the \Schr operator  formally defined in \eqref{eq:rdm1} with some or all $\omega_j = 0$ (see Remark \ref{rem:omega0}).

\medskip

 \item We expect that a similar delocalization result holds for the RDM on $\R^2$.  This will be discussed in a separate article.  Dynamical localization was proved in \cite{hkk1}.

\medskip

\item The fact that \eqref{eq:dyndeloc1} holds for $P_\omega (I) f_{I}$,
and not just for the initial condition $f_{I}$, indicates that the spectrum in the interval $[ \pi^2, \infty)$ is responsible for the nontrivial transport.

\medskip

\item As can be seen from the proof the function $f_{I} $ can be chosen in the following way.
For any bounded interval $I\subset (\pi^{2},\infty)$ choose $r=r_{I}>0$ small enough and set $f_{I}(x)=\chi_{B_{r}(x_{0})}$ the characteristic function of the ball $B_{r}(x_{0}) $ of radius $r$ around the point $x_{0}=(\frac{1}{2},0,0) $.

\item For the one-dimensional RDM, the Kronig-Penney model, the spectrum is known to be almost surely pure point with exponentially decaying eigenfunctions. There are also critical energies $(k \pi)^2$, with $k \in \NN$, at which the Lyapunov exponent vanishes. The main result of \cite{DKSB} is that the diagonal of the integral kernel of the transport operator, $\mm_{q}(T)(a,a)$, for any $a \in \R \backslash \Z$, is bounded below by a positive power of $T$. Unfortunately, this result does not imply that there exists a function $\psi \in L^2 (\R)$ for which $\mm_{\psi, q}(T)$is unbounded in $T$.

\medskip

\item The importance of the threshold $\pi^2$ may be seen by considering a periodic $\delta$-interaction potential for which $\omega _j = \lambda$ for all $j \in \Z^3$. According to \cite[Theorem III.1.4.5]{aghkh}, the spectrum of $H_\lambda$ is
\beq\label{eq:periodic1}
\sigma (H_\lambda) = [ E_0^{\lambda} (0), E_0^{\lambda} (-  (\pi,\pi,\pi)) ] \cap [ E_1^\lambda , \infty ) ,
\eeq
where $E_0^\lambda (k)$ is the first band function as a function of $k \in B_0$, the Brillouin zone $B_0 := [- \pi, \pi ]^3 $. As $\lambda \to \infty$, the bottom of the second, semi-infinite band, approaches $\pi^2$. See \cite[sections 4.1-4.2]{hkk1} for further details.

 \end{enumerate}
  \end{remarks}

 Some ideas in this paper are related to the authors' recent paper \cite{hkk3}. In that work, we showed that dynamical delocalization follows from certain growth bounds on generalized eigenfunctions.  The main application of these results is to certain $\Gamma$-trimmed \Schr operators for which we proved dynamical delocalization. Related results to $\Gamma$-trimmed discrete \Schr operators, including Anderson models, can be found in \cite{CR,EK,ES,KiKr1,KiKr2}.

There are many works relating spectral properties of \Schr operators to quantum dynamics. We mention the papers \cite{bcm, combes, guarneri, GuarneriSB, JSB,last}.
Other works that relate the behavior of solutions to the \Schr equation and transport  and spectral properties include \cite{CKL,KL}.
 A detailed analysis of the metal-insulator transition, based on multi-scale analysis, is presented in \cite{GK-character}.

\section{The ingredients in the proof of dynamical delocalization for the RDM}
\label{sec:ingred1}
\setcounter{equation}{0}

In this section, we present the main aspects of the proof of the first part of Theorem \ref{thm:loc_deloc1}: section \ref{subsec:transport1}: Measure of transport; section \ref{subsec:rdm1}: \Schr operators with random $\delta$-interactions;
section \ref{subsec:ct1}: Combes-Thomas estimate for RDM; and section \ref{subsec:gen_ef1}: Generalized eigenfunctions.

\subsection{Measure of transport}\label{subsec:transport1}

We begin with the following theorem that will allow us to convert estimates on matrix elements of time evolved states to estimates on matrix elements of the resolvent.

\begin{definition}\label{defn:weight1}
 A nonnegative function $\varphi$ is a \emph{weight function} if it is monotonically increasing and $\lim_{|x| \to \infty} \varphi(x) = + \infty$. We write $\varphi_q$ for the canonical weight $\varphi_q(x) := \langle x \rangle^q$, for $q > 0$.
\end{definition}


\begin{thm}\label{thm:timeres} Let $X$ denote the position operator on $L^2 (\R^3)$, and let $\varphi$ be a weight as in Definition \ref{defn:weight1}. For $T>0$, and an initial state $\psi \in L^2(\R^3)$, we have
\bea
   \mm_{\psi; \varphi}(T)  & : = & \frac{1}{T} \int_0^\infty e^{- \frac{t}{T}} \langle e^{-itH} \psi, \varphi (X) e^{-itH} \psi \rangle  \label{eq:mm0} \\
   &  = &  \frac{1}{2\pi T}\,\int_{-\infty}^{\infty} \, \| \varphi^{\frac{1}{2}}
    R_H \left( E +\frac{1}{2T}i \right) \psi   \|^{2} \,dE .  \label{eq:mm1}
\eea
\end{thm}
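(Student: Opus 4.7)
My plan is to derive the identity via a straightforward Plancherel argument applied to the vector-valued Laplace transform of $e^{-itH}\psi$. The starting point is the well-known resolvent-semigroup identity, valid for any self-adjoint $H$ and $\eta>0$:
\begin{equation*}
   R_H(E+i\eta)\psi ~=~ i\int_0^{\infty} e^{i(E+i\eta)t}\,e^{-itH}\psi\,dt,
\end{equation*}
where the integral converges in $L^2(\R^3)$ since $\|e^{-itH}\psi\|=\|\psi\|$ and $|e^{i(E+i\eta)t}|=e^{-\eta t}$. This identity will be the bridge between the time side of $\mm_{\psi;\varphi}(T)$ and the resolvent side.

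Next I would apply the (possibly unbounded) multiplication operator $\varphi^{1/2}(X)$ to both sides. Setting $g(t) := e^{-\eta t}\varphi^{1/2}(X)\,e^{-itH}\psi$ for $t\ge 0$ and $g(t):=0$ for $t<0$, the displayed identity formally reads $\varphi^{1/2}(X)R_H(E+i\eta)\psi = i\,\hat g(-E)$, where $\hat g$ denotes the $\hh$-valued Fourier transform in $t$ (with $\hh=L^2(\R^3)$). Plancherel's theorem for Hilbert-space-valued functions then gives
\begin{equation*}
   \int_{-\infty}^{\infty}\bigl\|\varphi^{1/2}(X)R_H(E+i\eta)\psi\bigr\|^2\,dE
   ~=~ 2\pi\int_0^{\infty} e^{-2\eta t}\,\|\varphi^{1/2}(X)e^{-itH}\psi\|^2\,dt.
\end{equation*}
Recognizing $\|\varphi^{1/2}(X)e^{-itH}\psi\|^2=\langle e^{-itH}\psi,\varphi(X)e^{-itH}\psi\rangle$ and choosing $2\eta=1/T$, the right-hand side equals $2\pi T\cdot\mm_{\psi;\varphi}(T)$, which rearranges to the claimed formula \eqref{eq:mm1}.

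The only genuine issue is the domain question for $\varphi^{1/2}(X)$: one must justify bringing $\varphi^{1/2}(X)$ inside the Bochner integral defining $R_H(E+i\eta)\psi$, and one must make sense of both sides even when $\varphi^{1/2}(X)\psi\notin L^2$. The standard fix is to truncate: let $\varphi_n(x):=\min(\varphi(x),n)$, so that $\varphi_n^{1/2}(X)$ is bounded and commutes trivially with the integral. For each fixed $n$ the identity above holds literally, and both sides are monotonically increasing in $n$ by the spectral theorem. Letting $n\to\infty$ and invoking monotone convergence on both sides (the time side by positivity of the integrand, the resolvent side because $\|\varphi_n^{1/2}R_H(E+i\eta)\psi\|^2\nearrow\|\varphi^{1/2}R_H(E+i\eta)\psi\|^2$ pointwise in $E$) gives the identity in full generality, including the possibility that both sides are $+\infty$. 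The absolute convergence of the time integral at every level $n$ is guaranteed by the factor $e^{-\eta t}$ together with the bound $\|\varphi_n^{1/2}(X)e^{-itH}\psi\|\le n^{1/2}\|\psi\|$, so no subtleties arise in the Plancherel step. I do not anticipate any deeper obstacles; this is essentially Parseval's identity rewritten in the spectral language appropriate to quantum transport.
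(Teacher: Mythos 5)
Your proof is correct and takes the same route as the paper: the resolvent--semigroup Laplace-transform identity followed by Plancherel, with $\eta=1/(2T)$. The only cosmetic difference is that you apply Plancherel for $L^2(\R^3)$-valued functions directly while the paper applies it pointwise in $x$ and then integrates; your truncation argument $\varphi_n=\min(\varphi,n)$ handling the unboundedness of $\varphi^{1/2}(X)$ is a welcome extra level of care that the paper glosses over.
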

A similar result is cited and used in \cite[section 3]{JSB} and in \cite[Lemma 6.3]{GK-character}. We present a proof based similar to the one in \cite[Lemma 6.3]{GK-character}.

\begin{proof}
We begin with an identity for any $T > 0$ and any self-adjoint operator $H$:
\beq\label{eq:resolv1}
-i R_H(E + i \frac{1}{2T} ) = \int_0^\infty e^{-i t( H - i\frac{1}{2T} - E)} ~dt.
\eeq
The integral is absolutely convergent. It follows that
\beq\label{eq:resolv2}
-i \varphi^{\frac{1}{2}} R_H( E + i \frac{1}{2T}) \psi =  \int_0^\infty e^{itE} g(t;x) ~dt,
\eeq
where the vector $g(t;x)$ is given by
\beq\label{eq:vector1}
g(t;x) = \varphi (x)^{\frac{1}{2}} e^{-i t( H - i\frac{1}{2T})} \psi(x), t \geq 0 .
\eeq
We extending the $L^2$-function $g(t;x)$ to be zero for $t < 0$.
Denoting by $\widehat{g}(E;x)$ the partial Fourier transform with respect to $t$, we obtain from the Plancherel identity
\beq\label{eq:plancherel1}
\frac{1}{2 \pi} \int_\R | \widehat{g}(E;x)|^2 ~dE = \int_0^\infty |g(t;x)|^2 ~dt .
\eeq
Integrating over $x \in \R^3$, we obtain from \eqref{eq:plancherel1}
\beq\label{eq:plancherel2}
\frac{1}{2 \pi} \int_\R \| \varphi^{\frac{1}{2}}  R_H( E + i \frac{1}{2T}) \psi \|^2 ~dE
= \int_0^\infty e^{- \frac{t}{T}} \| \varphi^{\frac{1}{2}}  e^{-it H} \psi \|^2 ~dt .
\eeq
Returning to the definition of $\mm_{\varphi;\psi}(T)$ in \eqref{eq:mm0},
the identity \eqref{eq:plancherel2} allows us to conclude that
\beq\label{eq:mm_final1}
\mm_{\psi;\varphi}(T) = \frac{1}{2 \pi T}  \int_\R \| \varphi^{\frac{1}{2}} R_H( E + i \frac{1}{2T}) \psi \|^2 ~dE,
\eeq
proving the identity \eqref{eq:mm1}.
\end{proof}`


\subsection{\Schr operators with random $\delta$-interactions}\label{subsec:rdm1}

We refer the reader to the encyclopedic work, \emph{Solvable models in quantum mechanics} \cite{aghkh}, for the construction of, and spectral properties of, the \Schr operators with $\delta$-interactions in dimensions 1, 2, and 3. These are the only dimensions for which the $\delta$-interaction is well-defined. We summarize and extend some of these results needed in our work for $\delta$-interactions on $\R^3$.

In \cite[III.1, Theorem 1.1.1]{aghkh}, the authors prove that the representation \eqref{eq:green1}, with $\Gamma (z, \omega)$ given in \eqref{eq:kernel1}, is well-defined and represents the Green's kernel of the self-adjoint operator $- \Delta_{\omega, \Z^3}$ (the notation of \cite{aghkh} for the operator formally given in \eqref{eq:rdm1}). They establish this result for ${\rm Im} ~k > 0$ \emph{sufficiently large.}

We analyze the interaction kernel $\Gamma (z, \omega)$ in Appendix \ref{subsec:Gamma1}.  We prove that it is a dissipative operator, that is, $- {\rm Im} ~ \Gamma (z, \omega)$ is bounded below by ${\rm Im} ~k > 0$. This allows us to prove that the inverse exists as a bounded operator on $\ell^2(\Z^3)$ for all ${\rm Im} ~k > 0$ and is analytic on $\CC^+$.
Since this operator agrees with the one constructed for ${\rm Im} ~k > 0$ sufficiently large, and is analytic in $k$, for ${\rm Im} ~ k > 0$, so by Vitali's Theorem it provides an extension of the result of \cite[III.1,Theorem 1.1.1, (1.1.9)]{aghkh} to the entire upper half plane ${\rm Im} ~k > 0$. We prove that the matrix elements $[ \Gamma^{-1}(z, \omega)]_{ij}$ decay exponentially in $|i-j|$, see \eqref{eq:gamma_inv2}. Together with the exponential decay of the free Green's function $G_0$, this establishes that the formula \eqref{eq:green1} is well-defined for all ${\rm Im} ~k > 0$.



\subsection{The Combes-Thomas estimate for the random $\delta$-potential \Schr operators}\label{subsec:ct1}

The Combes-Thomas estimate on the resolvent of $H_\omega$ plays a central role in our analysis. We prove a Combes-Thomas estimate on the resolvent of the singular RSO $H_{\omega}$ localized between two localized functions. The main technical result is the following bound on the interaction term \eqref{eq:kernel1}  in the expression of the resolvent. Let $C_0 \subset \R^3$ be the unit cube centered at the origin. We recall that $H_0 := - \Delta$, the nonnegative Laplacian on $L^2 (\R^3)$, has Green's function $G_0(x,y;z)$ given explicitly in \eqref{eq:defG0}. 
For $x\in\R^{3}$ we set
\begin{align}\label{eq:def_d}
   d(x)~=~\text{dist}(x,\Z^{3})~=~\inf \{\, \| x-m \|\mid m\in\Z^{3} \}
\end{align}
The following theorem is the main technical result and the Combes-Thomas estimate is given in Theorem \ref{thm:ct_final}.

\medskip

\begin{thm} \label{thm:ct_1}
Let $x,y \in \R^3\setminus \Z^{3}$, and $z \in \CC$ with ${\rm Im} ~ z > 0$.
There exist finite constants $\widetilde{C}=\widetilde{C}_{z}, {\widetilde{\gamma}} (z) > 0$, so that
    \begin{enumerate}
    \item The interaction kernel $\Gamma (z, \omega)$ in \eqref{eq:kernel1} satisfies
\beq\label{eq:interact_term1}
\left| \sum_{m,n \in \Z^3} \overline{G_0 (x,m;z)} [\Gamma (z,\omega)^{-1}]_{mn} G_0 (n,y;z) \right| \leq \frac{\widetilde{C}}{d(x)\,d(y)} e^{- \widetilde{\gamma} (z) \|x-y \|} .
\eeq

\medskip

\item We have
\beq\label{eq:interact_term_ave1}
\int_{C_0} ~ d^3x \int_{C_0} d^3y
\left| \sum_{j, \ell \in \Z^3} \overline{G_0 (x +m, j ;z)} \Gamma^{-1}(j, \ell) G_0 (\ell, y + n  ;z) \right| \leq {\widetilde{C}} e^{- \widetilde{\gamma} (z) \|n - m \|} .
\eeq
\end{enumerate}

\end{thm}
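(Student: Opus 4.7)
The plan is to combine the explicit exponential decay of the free Green's function, $|G_0(x,y;z)| = (4\pi\|x-y\|)^{-1} e^{-\alpha_0 \|x-y\|}$ with $\alpha_0 := {\rm Im}\sqrt{z}>0$, with the off-diagonal exponential decay $|[\Gamma(z,\omega)^{-1}]_{mn}| \leq C_\Gamma\, e^{-\gamma_0\|m-n\|}$ of the inverse interaction matrix established in Appendix~\ref{subsec:Gamma1}. Setting $\widetilde{\gamma}(z) := \frac{1}{2}\min(\alpha_0,\gamma_0)$ and using the three-point triangle inequality $\|x-m\|+\|m-n\|+\|n-y\|\geq \|x-y\|$ together with $\alpha_0 \geq 2\widetilde{\gamma}$, one verifies
\begin{equation*}
e^{-\alpha_0\|x-m\|}\,e^{-\gamma_0\|m-n\|}\,e^{-\alpha_0\|n-y\|} \leq e^{-\widetilde{\gamma}\|x-y\|}\,e^{-\widetilde{\gamma}\|x-m\|}\,e^{-(\gamma_0-\widetilde{\gamma})\|m-n\|}\,e^{-\widetilde{\gamma}\|n-y\|}.
\end{equation*}
Applying $e^{-(\gamma_0-\widetilde{\gamma})\|m-n\|}\leq 1$ to decouple the double sum, the left-hand side of \eqref{eq:interact_term1} is majorized by
\begin{equation*}
\frac{C_\Gamma}{(4\pi)^2}\, e^{-\widetilde{\gamma}\|x-y\|}\left(\sum_{m\in\Z^3}\frac{e^{-\widetilde{\gamma}\|x-m\|}}{\|x-m\|}\right)\left(\sum_{n\in\Z^3}\frac{e^{-\widetilde{\gamma}\|n-y\|}}{\|n-y\|}\right).
\end{equation*}

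The main expected obstacle is the uniform lattice-sum estimate $\sum_{m\in\Z^3} e^{-\widetilde{\gamma}\|x-m\|}/\|x-m\| \leq C_z/d(x)$ for every $x\in\R^3\setminus\Z^3$. I would isolate one nearest lattice point $m^\ast$ to $x$, whose contribution is $e^{-\widetilde{\gamma}d(x)}/d(x) \leq 1/d(x)$, and then show the remaining lattice sum is bounded uniformly in $x$ by a finite constant. This uniform bound follows from the packing estimate $\#\{m\in\Z^3:\, R\leq \|x-m\|\leq R+1\} = O(R^2)$ together with a Riemann-sum comparison with the convergent integral $\int_{\R^3}\|t\|^{-1}e^{-\widetilde{\gamma}\|t\|/2}\,d^3t$. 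Since $d(x)\leq \sqrt{3}/2$ globally, the uniform remainder constant is absorbed into the $C_z/d(x)$ term, which yields \eqref{eq:interact_term1}.

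For the averaged estimate in part (2), I substitute $x\to x+m$ and $y\to y+n$ in \eqref{eq:interact_term1} for $x,y\in C_0$. Since $\|(x+m)-(y+n)\|\geq \|m-n\|-\sqrt{3}$, the exponential factor obeys $e^{-\widetilde{\gamma}\|(x+m)-(y+n)\|}\leq e^{\widetilde{\gamma}\sqrt{3}}\,e^{-\widetilde{\gamma}\|m-n\|}$, while the remaining double integral factorizes as $\bigl(\int_{C_0} d^3x/d(x+m)\bigr)\bigl(\int_{C_0} d^3y/d(y+n)\bigr)$. By $\Z^3$-periodicity of $d$, each single integral equals $\int_{C_0}\|t\|^{-1}\,d^3t$, which is finite in three dimensions because the spherical Jacobian $r^2\,dr$ absorbs the $1/r$ singularity at the origin. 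Collecting all constants into $\widetilde{C}$ delivers \eqref{eq:interact_term_ave1}.
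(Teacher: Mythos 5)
Your proof is correct, but it proceeds along a noticeably different route than the paper's. The paper first replaces $x$ and $y$ by their nearest lattice points via Proposition~\ref{prop:green1} (obtaining $|G_0(x,m;z)| \leq \frac{C}{d(x)} e^{-\tau\|p-m\|}$ for $x \in p + C_0$), which pulls $\frac{1}{d(x)\,d(y)}$ out in front and leaves a purely discrete triple sum $\sum_{m,n} e^{-\tau\|p-m\|} |[\Gamma^{-1}]_{mn}| e^{-\tau\|n-q\|}$; it then invokes the exponential-convolution estimate of Lemma~\ref{lemma:double_sum1} (twice) to produce $e^{-\gamma'\|p-q\|}$, and finishes by comparing $\|p-q\|$ with $\|x-y\|$. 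You instead keep the $1/\|x-m\|$ factor in $G_0$, apply the three-point triangle inequality directly to extract $e^{-\widetilde{\gamma}\|x-y\|}$, drop the surviving $e^{-(\gamma_0-\widetilde{\gamma})\|m-n\|}\leq 1$ to \emph{decouple} the double sum into a product of two single lattice sums, and control each sum by isolating the nearest lattice point (contributing $1/d(x)$) and bounding the rest uniformly. Both strategies halve the decay rate somewhere; yours trades the clean convolution lemma for a cruder decoupling but is more self-contained and avoids the detour through Proposition~\ref{prop:green1} and the iterated convolution estimate. Your derivation of part~2 by direct $\Z^3$-periodic integration of $1/d(x+m)=1/d(x)$ over $C_0$ is also a valid alternative to the paper's route through part~2 of Proposition~\ref{prop:green1}. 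One small citation slip: the exponential decay $|[\Gamma(z,\omega)^{-1}]_{mn}|\leq C_\Gamma e^{-\gamma_0\|m-n\|}$ that you invoke is established in equation~\eqref{eq:gamma_inv2} of section~\ref{subsec:Gamma2}, not in section~\ref{subsec:Gamma1}.
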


We present the proof of this theorem in Appendix \ref{app: ct_proof1}.
The Combes-Thomas estimate for the Hamiltonian $H_{\omega}$ follows from Theorem \ref{thm:ct_1}. 

\begin{thm}\label{thm:ct_final} Let $G_{\omega}(x,y;z)$ denote the kernel of the resolvent of $H_{\omega}$ as in
\eqref{eq:green1}.
   For  $z\in\CC$ with ${\rm Im}\,z>0 $, there exist finite, positive constants $C=C_{z}$ and $\gamma(z)$, such that
       \begin{enumerate}
    \item For $x,y\in\R^{3} $ with $\| x-y \|\geq 1 $ we have

\beq\label{eq:CT-final1}
\left| G_{\omega}(x,y;z) \right| \leq \frac{C}{d(x)\,d(y)} e^{- \gamma (z) \| x-y \|} ,
\eeq
where $d(x)$ is defined in \eqref{eq:def_d}. 

\medskip
   \item
   For all $m,n\in \Z^{3}$
   \begin{align}\label{eq:CT-final2}
      \int_{C_0} ~ d^3x \int_{C_0} d^3y
\left| {G_{\omega} (x +m, y+n ;z)}  \right| \leq {C} e^{- {\gamma} (z) \|n - m \|}
   \end{align}
   \end{enumerate}
\end{thm}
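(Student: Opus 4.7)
The plan is to decompose the kernel via the resolvent formula \eqref{eq:green1} as
\[
G_\omega(x,y;z)~=~G_0(x,y;z)\,+\,K_\omega(x,y;z),
\]
where $K_\omega$ denotes the double sum containing $[\Gamma(z,\omega)^{-1}]_{ij}$. Theorem \ref{thm:ct_1} already supplies exactly the bounds on $K_\omega$ needed for \eqref{eq:CT-final1} (pointwise, part 1) and \eqref{eq:CT-final2} (integrated, part 2), so the theorem reduces to showing that the free Green's function $G_0$ by itself satisfies bounds of the same form. The total constants in Theorem \ref{thm:ct_final} will then be obtained by taking $\gamma(z):=\min\{{\rm Im}\sqrt{z},\widetilde{\gamma}(z)\}$ together with a sufficiently large $C$.

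For part (1), I would start from the explicit formula \eqref{eq:defG0}. Since ${\rm Im}\,z>0$ the principal branch gives ${\rm Im}\sqrt{z}>0$, yielding
\[
|G_0(x,y;z)|~\leq~\frac{1}{4\pi\,\|x-y\|}\,e^{-{\rm Im}\sqrt{z}\,\|x-y\|}.
\]
For $\|x-y\|\geq 1$ the factor $1/\|x-y\|$ is bounded by $1$. Moreover $d(x),d(y)\leq \sqrt{3}/2$ for every $x,y\in\R^3$, so $1/(d(x)d(y))$ is bounded below by $4/3$, and the constant $1/(4\pi)$ can be absorbed into a factor $C'/(d(x)d(y))$. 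Adding the pointwise estimate of Theorem \ref{thm:ct_1}(1) then produces \eqref{eq:CT-final1}.

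For part (2), Theorem \ref{thm:ct_1}(2) already controls the $K_\omega$ contribution, so the only remaining task is to estimate
\[
I(m,n)~:=~\int_{C_0}\int_{C_0}|G_0(x+m,y+n;z)|\,d^3x\,d^3y.
\]
I would split into two regimes. When $\|m-n\|\geq 2\sqrt{3}$, the triangle inequality yields $\|(x+m)-(y+n)\|\geq \|m-n\|/2$, so the factor $1/\|(x+m)-(y+n)\|$ is bounded and the exponential in $G_0$ supplies the decay $e^{-({\rm Im}\sqrt{z}/2)\|m-n\|}$. For the finitely many offsets with $\|m-n\|<2\sqrt{3}$, in particular $m=n$ where the integrand has a singularity along $x=y$, the change of variables $u=x-y$ reduces $I(m,n)$ to $\int_{B_{\sqrt{3}}(0)}\|u+(m-n)\|^{-1}\,du$ against a bounded density in the complementary variable; this is finite because $1/\|w\|$ is locally integrable in $\R^3$ (polar coordinates convert it to $\int r\,dr$). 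These finitely many bounded values are absorbed into $C$, since $e^{-\gamma\|m-n\|}$ stays bounded below on the compact range $\|m-n\|<2\sqrt{3}$.

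The only genuinely delicate step is the three-dimensional integrability in part (2) near coincident lattice points; it works here only because we are in $\R^3$ and because of the averaging over unit cubes rather than a pointwise evaluation, which would force us to confront the $1/\|x-y\|$ singularity of $G_0$ directly. Everything else amounts to bookkeeping on top of the pointwise and integrated estimates already established in Theorem \ref{thm:ct_1}.
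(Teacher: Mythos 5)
Your proposal is correct and follows exactly the same route as the paper: decompose $G_\omega = G_0 + K_\omega$, control $K_\omega$ with Theorem \ref{thm:ct_1}, control $G_0$ directly from \eqref{eq:defG0} using $\|x-y\|\geq 1$ to tame the $1/\|x-y\|$ factor, and set $\gamma(z)=\min(\tau(z),\widetilde\gamma(z))$. Where you differ is only in the level of detail on part (2): the paper disposes of it with the single sentence ``Part 2 follows by integration,'' whereas you correctly observe that the restriction $\|x-y\|\geq 1$ from part (1) is not available there, so the $1/\|(x+m)-(y+n)\|$ singularity of $G_0$ must be handled for the finitely many offsets $\|m-n\|<2\sqrt{3}$ via local integrability of $1/\|\cdot\|$ in $\R^3$, while for large $\|m-n\|$ the triangle inequality gives exponential decay. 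This is a genuine gap in the paper's one-line argument that your write-up fills in; otherwise the two proofs coincide.
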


\begin{proof}
By \eqref{eq:green1}, we have for $\| x-y \|\geq 1 $
\begin{align}\label{eq:greenabs}
|G_{\omega}(x,y;z)|~ &\leq~ |G_0(x,y;z)| + |\sum_{i,j \in \Z^3} \overline{G_0(x,i;{z})} [\Gamma (z, \omega)^{-1}]_{ij} G_0(j,y;z)|\notag \\
&\leq~\frac{1}{4\pi}\,e^{-\tau(z) \|x-y \|}~+~\frac{\widetilde{C}}{d(x)d(y)}\,
e^{-\widetilde{\gamma}(z) \|x-y \|}\notag\\
&\leq~ \frac{C}{d(x)\,d(y)}\,e^{-\gamma(z) \|x-y \|}  ,
\end{align}
where we used \eqref{eq:defG0}, the notation $\tau(z) :={\rm Im}(\sqrt{z})$, $\gamma(z) :=\min(\tau(z),\widetilde{\gamma}(z))$, and Theorem \ref{thm:ct_1}.
Part 2 follows by integration.
\end{proof}


\subsection{Generalized eigenfunctions}\label{subsec:gen_ef1}

For any ${\bfe} > 0$, we construct bounded, generalized eigenfunctions $\psi_E$ of the operator $H_\omega$ with eigenvalue $E = {\bfe} + \pi^2$.
Let $\rho \geq 0$ be a bounded function with compact support in $[ - \sqrt{{\bfe}} / 2 , \sqrt{{\bfe}} / 2]$ normalized so that $\int \rho = 1$.   We construct the function of $(u,v)\in\R^{2}$ by
\beq\label{eq:ef1}
\psi_{0;{\bfe}}(u,v) := \int_\R e^{i s u} e^{i v \sqrt{ {\bfe} - s^2}} ~ \rho (s) ~ ds ,
\eeq
so that $ - \Delta \psi_{0;{\bfe}} = {\bfe} \psi_{0;{\bfe}}$ with $\psi_{0;{\bfe}}(0,0) = 1$. 
We note the following two properties of $\psi_{0;\bfe}$:

 \begin{align}\label{eq:gef_z1}
 \int_{\R} | \psi_{0;{\bfe}}(u,v)|^2 ~du \leq C_1 <\infty \quad\text{and} \quad
\int_{\R} | \psi_{0;{\bfe}}(u,v)|^2 ~dv \leq C(\bfe)<\infty  .
  \end{align}

 For any $E =:  \pi^2 + {\bfe}$, and ${x} := (x_1,x_2,x_3) \in \mathbb{R}^3$, we define the bounded, generalized eigenfunctions  $\psi_E$ of $H_\omega$ by
\beq\label{eq:gef2}
\psi_E (x) := \sin ( \pi x_{1}) \psi_{0;{\bfe}}(x_{2},x_{3} )  = \sin ( \pi x_{1}) \int_\R e^{i s x_{2}}
e^{i x_{3} \sqrt{ {\bfe} - s^2}} ~ \rho (s) ~ ds  .
\eeq
Note that $\psi_E | \Z^3 = 0$ and
 $$
H_0 \psi_E := - \Delta \psi_E  = (  \pi^2 + {\bfe} ) \psi_E  = E \psi_E .
 $$
 As a consequence, we formally have
 \beq\label{eq:gev1}
 H_\omega \psi_E = (  \pi^2 + {\bfe}) \psi_E = E \psi_E.
  \eeq
 We will refer to $\psi_E$ as a generalized eigenfunction of $H_\omega$. Since $\psi_E \not\in L^2 (\R^3)$, we don't use the eigenvalue equation \eqref{eq:gev1} but the version in Lemma \ref{lemma:gef1} for a truncated form of $\psi_E$.

We consider a smooth, nonnegative bump function $\chi$ on $\mathbb{R}$ which
is $1$ in $[-1, ~1]$, supported in
$[-2, ~2]$. For $L > 0$, we define $\chi_L(x_{1},x_{2},x_{3}) = \chi(\frac{x_{1}}{L})\cdot \chi(\frac{x_{2}}{L}) $.
We note that $\chi_L$ is compactly supported in $(x_1, x_2)$ and is independent of $x_3$. 

We then get a family of `approximate' generalized eigenfunctions of $H_\omega$, given by
\begin{equation}\label{eqnK1}
\psi_{L,E}(x) = \chi_{L}(x)\,\psi_E(x)\,.
\end{equation}
Each function $\psi_{L,E}$ is bounded with $\| \psi_{L,E}\|_\infty \leq 1$, and 
$\psi_{L,E}\in L^{2}(\R^{3}) $ due to \eqref{eq:gef_z1}, with norm $\mathcal{O}(L^2)$.

\medskip

\begin{lemma}\label{lemma:gef1}
The family of bounded approximate generalized eigenfunctions $\psi_{L,E}$
in $\eqref{eqnK1}$, satisfy the following relation:
\bea\label{eq:gef3}
H_\omega \psi_{L,E} & =  & H_0 \chi_L \psi_E   \nonumber \\
 &= & \chi_L H_0 \psi_E + [H_0, \chi_L] \psi_E \nonumber \\
  &= & ( \pi^2 + {\bfe}) \psi_{L,E} + [H_0, \chi_L] \psi_E .
\eea
\end{lemma}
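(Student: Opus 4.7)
The plan is to observe that the second and third equalities in \eqref{eq:gef3} are essentially free. The second is just the commutator identity $H_0 \chi_L = \chi_L H_0 + [H_0,\chi_L]$ applied to $\psi_E$, and the third uses that $\psi_E$ satisfies $H_0 \psi_E = (\pi^2+\bfe)\psi_E$ pointwise, which is immediate from \eqref{eq:gef2}: the $\sin(\pi x_1)$ factor contributes $\pi^2$ under $-\partial_{x_1}^2$, while differentiating $\psi_{0;\bfe}(x_2,x_3)$ under the integral in \eqref{eq:ef1} gives $(-\partial_{x_2}^2-\partial_{x_3}^2)\psi_{0;\bfe} = \bfe\,\psi_{0;\bfe}$ because $s^2 + (\sqrt{\bfe-s^2})^2 = \bfe$ on the support of $\rho$. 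The real content of the lemma is therefore the first equality $H_\omega \psi_{L,E} = H_0 \psi_{L,E}$, which expresses that the $\delta$-interactions act trivially on $\psi_{L,E}$ because this function vanishes at every lattice site.

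To prove the first equality I would use the resolvent definition \eqref{eq:green1}. First, verify that $\psi_{L,E}\in H^2(\R^3)$: the $L^2$ bound $\|\psi_{L,E}\|_2 = \mathcal{O}(L^2)$ follows by Fubini from \eqref{eq:gef_z1} (the $\chi_L$ factor gives compact support in $(x_1,x_2)$, while \eqref{eq:gef_z1} controls the $x_3$-integral), and differentiation under the integral \eqref{eq:ef1} yields bounded derivatives of all orders for $\psi_E$, so $\Delta\psi_{L,E}$ decomposes into $-(\pi^2+\bfe)\psi_{L,E}$ plus the commutator terms $-2\nabla\chi_L\cdot\nabla\psi_E + (\Delta\chi_L)\psi_E$ which are bounded and compactly supported, hence in $L^2$.

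Next, fix any $z$ with $\operatorname{Im} z>0$, set $g:=(H_0-z)\psi_{L,E}\in L^2(\R^3)$, and compute $R_\omega(z)g$ using the kernel \eqref{eq:green1}. The unperturbed part $R_0(z)g$ equals $\psi_{L,E}$, and the correction is a double sum whose $j$-slice contains the scalar
\begin{equation*}
   \int_{\R^3} G_0(j,y;z)\,g(y)\,dy \;=\; \psi_{L,E}(j) \;=\; 0,
\end{equation*}
which I would obtain from Green's identity (using $(-\Delta_y - z)G_0(j,\cdot;z) = \delta_j$ distributionally) together with $\sin(\pi j_1)=0$ for $j_1\in\Z$. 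Consequently the entire correction collapses, $R_\omega(z)g = \psi_{L,E}$, which places $\psi_{L,E}\in \mathrm{Dom}(H_\omega)$ and gives $H_\omega\psi_{L,E} = g + z\psi_{L,E} = H_0\psi_{L,E} = H_0\chi_L\psi_E$, as needed.

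The one technical hurdle is justifying Green's identity, since $\psi_{L,E}$ is not compactly supported in the $x_3$-direction (the cutoff $\chi_L$ only truncates $(x_1,x_2)$). I would handle this by integrating over a ball $B_R$ and sending $R\to\infty$: because $\operatorname{Im} z>0$, the Green's function $G_0(j,y;z)$ decays like $e^{-\tau(z)\|y-j\|}/\|y-j\|$, while $\psi_{L,E}$ and $\nabla\psi_{L,E}$ are uniformly bounded, so the surface terms on $\partial B_R$ vanish exponentially. The absolute convergence of the $(i,j)$-sum defining the correction is automatic from the exponential decay estimates on $[\Gamma(z,\omega)^{-1}]_{ij}$ and on $G_0$ established in Theorem~\ref{thm:ct_1}, so no separate argument is needed there.
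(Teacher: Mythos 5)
Your argument is correct, and you are right that the first equality $H_\omega\psi_{L,E}=H_0\psi_{L,E}$ is the only point needing a proof. The paper establishes it in one line by invoking the domain characterization for $\delta$-interactions from \cite[III.1, Theorem 1.1.2]{aghkh}: an $H^2(\R^3)$-function vanishing at every point of $\Z^3$ belongs to $D(H_\omega)$, and $H_\omega$ acts on it as $-\Delta$. You instead rederive this fact from the resolvent kernel \eqref{eq:green1}: with $g:=(H_0-z)\psi_{L,E}$, the free piece gives $R_0(z)g=\psi_{L,E}$, and the $\Gamma^{-1}$-correction collapses because $(R_0(z)g)(j)=\psi_{L,E}(j)=0$ for all $j\in\Z^3$, so $R_\omega(z)g=\psi_{L,E}$, which is exactly the domain statement needed. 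Your route is more self-contained, using only \eqref{eq:green1} and the decay estimates of Theorem~\ref{thm:ct_1} to justify the sum--integral interchange, at the cost of being longer than the paper's citation; the paper is shorter but offloads the work onto the reference. One streamlining you could make: rather than running Green's identity over $B_R$ and estimating surface terms, simply observe that $R_0(z)g=R_0(z)(H_0-z)\psi_{L,E}=\psi_{L,E}$ holds as an identity in $H^2(\R^3)\hookrightarrow C(\R^3)$ (Sobolev embedding in dimension $3$), so pointwise evaluation $(R_0(z)g)(j)=\psi_{L,E}(j)=0$ is immediate, with no cutoff or boundary-term analysis required.
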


\begin{proof}
The function $\psi_{L,E} \in D(H_0)$ and $\psi_{L,E} |_{\Z^3} = 0$,
by construction of $\psi_E$.
According to \cite[III.1, Theorem 1.1.2]{aghkh}, $\psi_{L,E}$ is in the domain $D(H_{\omega}) $ of $H_{\omega}$ and $H_{\omega}\psi_{L,E}=H_{0}\psi_{L,E}$
\end{proof}
\medskip

\noindent We need estimates on the commutator term in \eqref{eq:gef3}.
We observe
$$\int_{\R^3} |\psi_{L,E}(x)|^2 ~d^3 x  = \mathcal{O}(L^2)\,,$$
since $\psi_{L,E} $ is $L^{2}$ in $x_{3}$-direction.

The partial derivatives, needed for the commutator estimate, satisfy
 $$
 \int_{\R^3} \left| \left( \frac{\partial}{\partial w} \chi_L(x) \right) \left( \frac{\partial}{\partial w}  \psi_{E}(x) \right) \right|^2 ~d^3 x \leq C <\infty \,.
 $$
for $w \in\{ x_1, x_2, x_3 \}$.

These estimates lead to the following bounds on the commutator in \eqref{eq:gef3}.

\begin{lemma}\label{lemma:remainder1}
There exist finite constant $A_1, A_2 = A_2(E) > 0$, independent of $L$, so that
\beq\label{eq:comm_bd2}
\int_{\R^3} | ( \Delta {\chi_L}({x}) )  \psi_{E}({x})|^2 ~d^3 {x} \leq A_1  \,,
\eeq
and
\beq\label{eq:comm_bd1}
\int_{\R^3} | \nabla {\chi_L}({x}) \cdot \nabla \psi_E({x})|^2 ~d^3 {x} \leq A_2 \,.
\eeq
\medskip
As a consequence, the function
\bea\label{eq:error1}
\mathcal{C}_{L;E}({x})  & := & ([ - \Delta, \chi_L] \psi_E )({x})  \nonumber \\
 &= & - 2 \{ ( \nabla \chi_L)  \cdot (\nabla \psi_E) \} ({x}) - \{ ( \Delta \chi_L ) \psi_E \} ({x}) ,   \nonumber   \\
\eea
satisfies the bound
\beq\label{eq:unif_error_bd1}
\| \mathcal{C}_{L;E} \|_{L^2(\R^3)} \leq A_0 (E) \, ,
\eeq
where the finite constant $A_0(E)$ is locally bounded in $E$, and independent of $L$.
\end{lemma}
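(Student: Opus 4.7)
The strategy is to exploit the product structure: $\chi_L$ depends only on $x_1,x_2$ and factorizes as $\chi(x_1/L)\chi(x_2/L)$, while $\psi_E(x)=\sin(\pi x_1)\psi_{0;\bfe}(x_2,x_3)$ factorizes the same way. Fubini then reduces each of the two integrals to a product of one- and two-dimensional integrals that are controlled by \eqref{eq:gef_z1} (plus its derivative analogue for the $x_2$-piece).

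Concretely, I would first record the derivatives
\[
\nabla\chi_L = L^{-1}\bigl(\chi'(x_1/L)\chi(x_2/L),\;\chi(x_1/L)\chi'(x_2/L),\;0\bigr),
\]
\[
\Delta\chi_L = L^{-2}\bigl[\chi''(x_1/L)\chi(x_2/L)+\chi(x_1/L)\chi''(x_2/L)\bigr],
\]
together with the support facts: $\chi(x_i/L)$ is supported in $|x_i|\leq 2L$ (one-dimensional measure $4L$); $\chi'(x_i/L)$ and $\chi''(x_i/L)$ are supported in the annulus $L\leq|x_i|\leq 2L$ (measure $2L$); and $\|\chi^{(k)}\|_\infty$ is independent of $L$. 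For \eqref{eq:comm_bd2}, expand $|\Delta\chi_L|^2$, use $(a+b)^2\leq 2(a^2+b^2)$, and apply Fubini in each summand. Dispatching the $x_3$-integral via the second bound of \eqref{eq:gef_z1} gives, for the first summand,
\[
L^{-4}\!\int\!|\chi''(x_1/L)|^2\sin^2(\pi x_1)\,dx_1\!\int\!|\chi(x_2/L)|^2\!\left(\!\int|\psi_{0;\bfe}|^2 dx_3\!\right)\!dx_2 \;\leq\; L^{-4}\,(2L\|\chi''\|_\infty^2)(4L\|\chi\|_\infty^2)\,C(\bfe),
\]
which is of order $C(\bfe)/L^2$, hence uniformly bounded for $L\geq 1$; the second summand is symmetric.

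For \eqref{eq:comm_bd1}, split $\nabla\chi_L\cdot\nabla\psi_E$ into its $x_1$- and $x_2$-components. The $x_1$-piece pairs $L^{-1}\chi'(x_1/L)\chi(x_2/L)$ with $\pi\cos(\pi x_1)\psi_{0;\bfe}(x_2,x_3)$ and is handled exactly as above, yielding a bound of order $L^{-2}\cdot(2L)\cdot(4L)\cdot C(\bfe)$, which is independent of $L$. The $x_2$-piece pairs $L^{-1}\chi(x_1/L)\chi'(x_2/L)$ with $\sin(\pi x_1)\,\partial_{x_2}\psi_{0;\bfe}$, and the same scheme closes provided one has the analogue
\[
\int_\R |\partial_{x_2}\psi_{0;\bfe}(x_2,x_3)|^2\,dx_3 \;\leq\; \widetilde C(\bfe)\quad\text{uniformly in } x_2.
\]
Differentiating \eqref{eq:ef1} under the integral shows $\partial_{x_2}\psi_{0;\bfe}$ has the same form as $\psi_{0;\bfe}$ with $\rho(s)$ replaced by $is\,\rho(s)$, still bounded and compactly supported in $[-\sqrt{\bfe}/2,\sqrt{\bfe}/2]$; the derivation of \eqref{eq:gef_z1} then applies verbatim, yielding the required uniform bound.

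Part \eqref{eq:unif_error_bd1} is then immediate from the triangle inequality applied to the explicit form of $\mathcal{C}_{L;E}$ in \eqref{eq:error1}, with $A_0(E) := 2\sqrt{A_2}+\sqrt{A_1}$; local boundedness in $E$ is inherited from the continuous dependence of $\rho$, $C(\bfe)$, and $\widetilde C(\bfe)$ on $\bfe=E-\pi^2$. The only genuine matter of substance is verifying the uniform $L^2(dx_3)$-bound on $\partial_{x_2}\psi_{0;\bfe}$; everything else is bookkeeping of the scaling, with the $L^{-2}$ and $L^{-1}$ prefactors from $\Delta\chi_L$ and $\nabla\chi_L$ exactly compensated (or beaten) by the linear-in-$L$ measures of the annular supports of $\chi',\chi''$.
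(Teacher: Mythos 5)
Your proof is correct and is essentially the argument the paper has in mind; the paper itself records only the two unproved observations preceding the lemma (that $\psi_{L,E}$ is $L^2$ in the $x_3$-direction, giving the $\mathcal{O}(L^2)$ norm, and that the mixed partial-derivative integrals are finite), and you have supplied the Fubini/scaling bookkeeping that turns those observations into the stated bounds. Your accounting of the $L$-powers is right: $\Delta\chi_L$ carries $L^{-4}$ against two factors of measure $\mathcal{O}(L)$, giving $\mathcal{O}(L^{-2})$, while $\nabla\chi_L\cdot\nabla\psi_E$ carries $L^{-2}$ against the same two factors, giving $\mathcal{O}(1)$, and both are uniform in $L\geq 1$. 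The one substantive point you correctly isolate — the uniform $L^2(dx_3)$ bound on $\partial_{x_2}\psi_{0;\bfe}$ — is handled by your observation that differentiating \eqref{eq:ef1} in $x_2$ replaces $\rho(s)$ by $is\,\rho(s)$, which is again bounded with support in $[-\sqrt{\bfe}/2,\sqrt{\bfe}/2]$, so the same Plancherel-type bound as in \eqref{eq:gef_z1} applies; indeed the extra factor of $s$ only helps near the stationary point $s=0$. The final triangle-inequality assembly $A_0(E)=2\sqrt{A_2}+\sqrt{A_1}$, and the remark on local boundedness in $E$ through the dependence of $\rho$ and the constants on $\bfe=E-\pi^2$, complete the proof.
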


 Because of Lemma \ref{lemma:gef1} and Lemma \ref{lemma:remainder1}, the family  $\{ \psi_{L,E} \}_{L}$ is a type of a `generalized Weyl sequence' for $H_\omega$ and $E= \pi^2 + {\bfe}$, any ${\bfe} > 0$. For $z = E + i \epsilon$, with $\epsilon > 0$, we obtain
 \beq\label{eq:weyl1}
 (H_\omega - z  ) \psi_{L,E} = - i \epsilon \psi_{L,E}  +\mathcal{C}_{L;E}. 
 \eeq
 By the bound \eqref{eq:unif_error_bd1}, we obtain for $\epsilon = 0$, that
 \beq\label{eq:weyl2}
 \|  (H_\omega - E ) \psi_{L,E}  \| \leq A_0(E) \, .
 \eeq
We remind the reader that $\sup \| \psi_{L,E} \|_{2} $ is \emph{not} bounded with respect to $L$ however, a property required for an honest Weyl sequence.

\section{Proof of the main theorem: Dynamical delocalization}\label{sec:dynam_deloc1}
\setcounter{equation}{0}

Let  $x_0 := ( \frac{1}{2}, 0, 0)$ and $\chi_{B_t(x_0)}$ be the characteristic function of a ball ${B_t(x_0)} \subset \R^3 \backslash \Z^3$, for $0 < t < \frac{1}{2}$. Let $\psi_E$ be a generalized eigenfunction of $H_\omega$ with eigenvalue $E$ as in \eqref{eq:gef2}.
We take $\psi_{L,E}$ as in equation (\ref{eqnK1}).

For $z=E  + i \epsilon \not\in \sigma (H_\omega)$, for $\epsilon > 0$, as in Lemma \ref{lemma:gef1}, we compute
\bea\label{eq:resolv_loc1}
\psi_{L,E}  & = & R(z) (H_\omega - z)\psi_{L,E} =  R(z) (H_\omega - z) \chi_L \psi_E \nonumber \\
 & = & -i \epsilon R(z) \psi_{L,E} +   R(z)[ -\Delta, \chi_L] \psi_E \nonumber \\
  & = &    -i \epsilon R(z) \psi_{L,E} +   R(z) \mathcal{C}_{L;E}  ,
 \eea
 where the $L^2(\R^3)$-function $\mathcal{C}_{L;E} := [ -\Delta, \chi_L] \psi_E$ , as  in \eqref{eq:error1}.

Multiplying the equation \eqref{eq:resolv_loc1} on the left by $\chi_{B_t(x_0)}$,  integrating, and expressing the finite integrals as inner products, we obtain 
  \bea\label{eq:resolv_loc2}
\langle \chi_{B_t(x_0)},  \psi_E \rangle
  & = &  -i \epsilon \langle \chi_{B_t(x_0)} ,  R(z) \psi_{L,E} \rangle  + \langle \chi_{B_t(x_0)},  R(z)[ -\Delta, \chi_L] \psi_E \rangle \nonumber \\
   & = &  -i \epsilon \langle \chi_{B_t(x_0)} ,  R(z) \psi_{L,E} \rangle  +
   \langle \chi_{B_t(x_0)},  R(z) \mathcal{C}_{L;E} \rangle .
 \eea
We first estimate the matrix element involving  $\mathcal{C}_{L;E}$.

 \medskip

 \begin{lemma}\label{lemma:ct_app1}
 For any fixed $\epsilon > 0$, there is an $L_\epsilon > 0$, so that for all $L > L_\epsilon$, we have
\beq\label{eq:resolv_loc3}
|\langle \chi_{B_t(x_0)},  R(z) \mathcal{C}_{L;E}  \rangle | \leq \frac{1}{4}
| \langle \chi_{B_t(x_0)},  \psi_E \rangle  | ,
\eeq
where  $\mathcal{C}_{L;E} := [ -\Delta, \chi_L] \psi_E .$
\end{lemma}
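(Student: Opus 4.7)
The approach is to exploit the large spatial separation between $\chi_{B_t(x_0)}$, which is supported near $x_0=(\tfrac{1}{2},0,0)$, and $\mathcal{C}_{L;E}$, whose support $S_L$ is contained in the region where $\nabla\chi_L$ or $\Delta\chi_L$ is nonzero, namely in $\{y\in\R^3:|y_1|\in[L,2L]\}\cup\{y:|y_2|\in[L,2L]\}$. This region lies at distance $\sim L$ from $x_0$, so the Combes--Thomas kernel estimate of Theorem~\ref{thm:ct_final} forces $|\langle\chi_{B_t(x_0)},R(z)\mathcal{C}_{L;E}\rangle|$ to be exponentially small in $L$, whereas $|\langle\chi_{B_t(x_0)},\psi_E\rangle|$ is a fixed positive constant independent of $L$ (since $\psi_E(x_0)=\sin(\pi/2)\,\psi_{0;\bfe}(0,0)=1$ by construction, so by continuity the integrated value is bounded below uniformly for $t$ sufficiently small).

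Concretely, I would start from the kernel representation
\beqnn
\langle\chi_{B_t(x_0)},R(z)\mathcal{C}_{L;E}\rangle
=\iint\chi_{B_t(x_0)}(x)\,G_\omega(x,y;z)\,\mathcal{C}_{L;E}(y)\,dx\,dy,
\eeqnn
and partition both integrations by unit lattice cubes $C_m=m+C_0$. Only the finite set $M_0=\{m\in\Z^3:C_m\cap B_t(x_0)\neq\emptyset\}$ contributes to the $x$-sum. The scalings $|\nabla\chi_L|=O(1/L)$ and $|\Delta\chi_L|=O(1/L^2)$, combined with uniform bounds $\|\psi_E\|_\infty,\|\nabla\psi_E\|_\infty\leq C(E)$ (immediate from \eqref{eq:ef1} and the compact support of $\rho$ in $[-\sqrt{\bfe}/2,\sqrt{\bfe}/2]$), give the pointwise estimate $|\mathcal{C}_{L;E}(y)|\leq C(E)/L$ on $S_L$. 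Inserting the integrated Combes--Thomas bound~\eqref{eq:CT-final2} then yields
\beqnn
|\langle\chi_{B_t(x_0)},R(z)\mathcal{C}_{L;E}\rangle|
\ \leq\ \frac{C(E)}{L}\,\sum_{m\in M_0}\,\sum_{\substack{n\in\Z^3 \\ C_n\cap S_L\neq\emptyset}}
C_z\,e^{-\gamma(z)\|m-n\|}.
\eeqnn

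The delicate point is that $\chi_L$ does not cut off in the $x_3$-direction, so $S_L$ is unbounded in $y_3$ and the sum over $n$ ranges over all $n_3\in\Z$. This is handled by the genuinely three-dimensional decay in the exponent: using $\|m-n\|\geq\tfrac{1}{2}\bigl(\sqrt{(n_1-m_1)^2+(n_2-m_2)^2}+|n_3-m_3|\bigr)$, the sum factorizes into a convergent series $\sum_{n_3\in\Z}e^{-\gamma(z)|n_3-m_3|/2}=O(1)$ and a two-dimensional sum over the annular region $L-1\leq\max(|n_1|,|n_2|)\leq 2L+1$, which is controlled by $\sum_{R\geq L-1}R\,e^{-\gamma(z)R/2}=O(L\,e^{-\gamma(z)L/2})$. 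The factors $1/L$ and $L$ cancel, leaving $|\langle\chi_{B_t(x_0)},R(z)\mathcal{C}_{L;E}\rangle|\leq C'(E)\,e^{-\gamma(z)L/2}$. Choosing $L_\epsilon$ large enough relative to the fixed (positive) $\gamma(z)$ then gives the factor $1/4$.

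The main obstacle I anticipate is the singularity of the free Green's function at lattice points: the pointwise bound~\eqref{eq:CT-final1} carries factors $1/(d(x)d(y))$ that blow up in unit cubes centered at $\Z^3$, and these cannot be avoided because $S_L$ certainly contains lattice points. This is precisely why the integrated version~\eqref{eq:CT-final2}, which averages the singularities over unit cubes, is the natural tool, and why both the $x$- and $y$-integrations must be decomposed by lattice cubes rather than by balls.
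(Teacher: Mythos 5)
Your proposal is correct and proceeds from the same basic idea as the paper: the support of $\mathcal{C}_{L;E}$ lies at distance $\gtrsim L$ from $B_t(x_0)$, so the Combes--Thomas bound forces the matrix element to be exponentially small in $L$, while $|\langle\chi_{B_t(x_0)},\psi_E\rangle|$ is a fixed positive constant. The difference is in the level of care. The paper's proof compresses the argument into the single display \eqref{eq:me_bd1}, bounding $|G_\omega(x,y;z)|$ by a constant $C_z e^{-\gamma(z)L}$ over the product of supports and then replacing $\int_{\mathcal{A}_L}\mathcal{C}_{L;E}$ by the $L^2$ constant $A_0(E)$. As written this is sloppy on two points you rightly flag: (i) $\chi_L$ does not cut off in $x_3$, so the support of $\mathcal{C}_{L;E}$ is unbounded in the $y_3$-direction and a flat $e^{-\gamma(z)L}$ bound times an $L^1$-integral is not obviously finite (and \eqref{eq:unif_error_bd1} only controls the $L^2$-norm, not the $L^1$-norm); (ii) the pointwise CT bound \eqref{eq:CT-final1} carries the singular factor $1/(d(x)d(y))$ which blows up at lattice points inside $\mathcal{A}_L$. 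Your decomposition by unit lattice cubes and use of the integrated estimate \eqref{eq:CT-final2} resolves (ii), and keeping the full exponential factor $e^{-\gamma(z)\|m-n\|}$ in the cube sum (rather than flattening it to $e^{-\gamma(z)L}$) resolves (i): the geometric decay in $n_3$ makes the $x_3$-sum convergent, and the 2D annular sum contributes $O(L e^{-\gamma(z)L/2})$, whose factor $L$ is cancelled by the $|\nabla\chi_L|=O(1/L)$ scaling. The net $O(e^{-\gamma(z)L/2})$ decay is what the paper asserts. In short, you have filled in precisely the details the paper leaves implicit, and your account is the more rigorous of the two.
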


\begin{proof}
Recalling that $z = E + i \epsilon$, and that fact that $\mathcal{C}_{L;E}$ is supported in $\mathcal{A}_L := \Lambda_{2L}(0) \backslash \Lambda_L(0)$, we obtain using Theorem \ref{thm:ct_final}
\bea\label{eq:me_bd1}
| \langle \chi_{B_t(x_0)},  R(z) \mathcal{C}_{L;E} \rangle |  & \leq & C_z e^{- \gamma(z) L} \left( \int \chi_{B_t(x_0)} ({x}) ~d^3 {x} \right) \left( \int_{\mathcal{A}_L} \mathcal{C}_{L;E} ({x})  ~ d^3 {x} \right) \nonumber  \\
 & \leq  &  C_z A_0(E)  e^{- \gamma(z) L} ,
 \eea
where the constant $A_0(E)$ is from the bound \eqref{eq:weyl2}. The constant $C_z$,  and the exponent $\gamma(z) > 0$, are from the Combes-Thomas estimate. It follows that for $\epsilon > 0$ fixed, and uniformly in $E$ belonging to a compact subset of $\R^+$, we can make the left side of \eqref{eq:me_bd1} as small as desired. In particular, for large $L > L_\epsilon$, inequality \eqref{eq:resolv_loc3} holds.
\end{proof}

\medskip

It follows from this and \eqref{eq:resolv_loc2} that for $L$ large enough, depending on $\epsilon$, we have
\beq\label{eq:resolv_loc4}
\frac{3}{4} | \langle \chi_{B_t(x_0)},  \psi_E \rangle  | \leq
 \epsilon | \langle \chi_{B_t(x_0)} ,  R(z) \psi_{L,E} \rangle  | .
\eeq
We next introduce the spectral projectors $P_\omega (I)$ for $H_\omega$ and any compact interval $I \subset (\pi^2 , \infty)$:  $ P_\omega (I) + P_\omega ({I^c}) = 1$. Inserting this identity after the resolvent on the right side of \eqref{eq:resolv_loc4}, we obtain two terms:
\bea\label{Eq:me1}
\langle \chi_{B_t(x_0)} ,  R(z) \psi_{L,E} \rangle   & = &
\langle \chi_{B_t(x_0)} ,  R(z) P_\omega (I) \psi_{L,E} \rangle \nonumber \\
 &  & +
 \langle \chi_{B_t(x_0)} ,  R(z) P_\omega ({I^c}) \psi_{L,E} \rangle  \nonumber \\
   & =: & A + B .
\eea
We analyze term B using Lemma \ref{lemma:proj_bd1} and \eqref{eq:proj_bd2}. In particular, in Lemma \ref{lemma:proj_bd1}, we take $E = \bfe + \pi^2$, $\psi$ to be $\psi_{L;E}$, $\delta = \frac{1}{2} |I|$,  and the constant $a$ in \eqref{eq:hyp1} to be $A_0(E)$ in \eqref{eq:weyl2}. As a result, we have from \eqref{eq:proj_bd2}
\bea\label{eq:est_B_1}
| B |  & = &   | \langle \chi_{B_t(x_0)} ,  R(z) P_\omega ({I^c}) \psi_{L,E} \rangle |
\nonumber \\
  & \leq &  C t^3 \frac{A_0(E)}{|I|^2} ,
  \eea
so the upper bound for $|B|$ is independent of $L$. Hence, we can take $\epsilon$ small enough so
$$
\epsilon |B| \leq \frac{1}{4} | \langle \chi_{B_t(x_0)},  \psi_E \rangle   |  .
$$
As a result, it follows from \eqref{eq:resolv_loc4} that  we have
\beq\label{eq:resolv_loc5}
\frac{1}{2} | \langle \chi_{B_t(x_0)},  \psi_E \rangle  | \leq
 \epsilon | \langle \chi_{B_t(x_0)} ,  R(z)  P_\omega ({I}) \psi_{L,E} \rangle  | .
\eeq

\medskip

We recall the moment weight function $\varphi_q(x) = \langle x \rangle^q$.
We introduce this into the matrix element on the right side of \eqref{eq:resolv_loc5}. By the Cauchy-Schwartz inequality we obtain:
\bea\label{eq:monent1}
| \langle \chi_{B_t(x_0)} ,  R(z)  P_\omega ({I}) \psi_{L,E} \rangle |
& = & | \langle \varphi_q ^{\frac{1}{2}}   P_\omega ({I}) R(z)^* \chi_{B_t(x_0)} ,  \varphi_q ^{-\frac{1}{2}}  \psi_{L,E} \rangle |  \nonumber \\
 & \leq & \| \varphi_q ^{\frac{1}{2}} P_\omega ({I}) R(z) \chi_{B_t(x_0)} \|_2 ~ \|\varphi_q ^{-\frac{1}{2}} \psi_{L,E} \|_2 \nonumber \\
  & \leq & \|\varphi_q ^{\frac{1}{2}}  R(z)  P_\omega ({I}) \chi_{B_t(x_0)}  \|_2 ~
  \| \varphi_q ^{-\frac{1}{2}} \psi_E \|_2 .  \nonumber \\
   & &
  \eea
  This yields the key lower bound:
  so that
  \beq\label{eq:resolv_loc6}
\frac{1}{2} \left(  \frac{ | \langle \chi_{B_t(x_0)},  \psi_E \rangle  |}{\| \varphi_q ^{-\frac{1}{2}} \psi_E \|_2}  \right)  \leq
 \epsilon \|\varphi_q ^{\frac{1}{2}}  R(z)  P_\omega ({I}) \chi_{B_t(x_0)}  \|_2
  .
\eeq

\medskip

%

In the following lemma, we provide bounds on the numerator and denominator of the left of \eqref{eq:resolv_loc6}.

\begin{lemma}\label{lemma:bounds2}
Let $\varphi_q(x) := \langle x \rangle^q$, for $q >0$.
\begin{enumerate}
\item  For $q > {3}$, we have
\beq\label{eq:norm_ub1}
\| \varphi_q^{-\frac{1}{2}} \psi_E \|^2 \leq  4 \pi \left(  \frac{q}{q -3}\right)  .
\eeq

\medskip

\item
Let $x_0 = ( \frac{1}{2}, 0 ,0)  \in \R^3 \backslash  \Z^3$ and $0 < t < \frac{1}{2}$.
Then, for $\bfe > 0$, there exists a radius $0 < t_E < \frac{1}{4}$, so that
\bea\label{eq:integral_lb1}
| \langle \chi_{B_{t_E}(x_0)},  \psi_E \rangle | & \geq & \frac{\sqrt{2}}{2} | B_{t_E}(x_0)| \cos (\frac{1}{2})   \nonumber \\
  & \geq & \frac{2\sqrt{2}}{3} \pi t_E^3 \cos (\frac{1}{2})  .
  \eea
\end{enumerate}
\end{lemma}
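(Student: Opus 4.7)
My starting point is the pointwise bound $|\psi_E(x)|\le 1$ on $\R^{3}$: one has $|\sin(\pi x_1)|\le 1$ trivially, and \eqref{eq:ef1} together with $\rho\ge 0$ and $\int\rho=1$ gives $|\psi_{0;{\bfe}}(x_2,x_3)|\le 1$. This reduces the problem to controlling
\[
\|\varphi_q^{-1/2}\psi_E\|_2^{\,2}\;\le\;\int_{\R^{3}}(1+\|x\|^{2})^{-q/2}\,d^{3}x\;=\;4\pi\int_{0}^{\infty}\frac{r^{2}}{(1+r^{2})^{q/2}}\,dr,
\]
which I would evaluate by splitting the radial integral at $r=1$. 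On $[0,1]$, the bound $(1+r^2)^{-q/2}\le 1$ gives $\int_0^{1}r^{2}\,dr=1/3$; on $[1,\infty)$, the estimate $(1+r^{2})^{q/2}\ge r^{q}$ reduces the integral to $\int_{1}^{\infty}r^{2-q}\,dr=1/(q-3)$ when $q>3$. Summing and using the elementary inequality $\tfrac{1}{3}+\tfrac{1}{q-3}\le \tfrac{q}{q-3}$ yields \eqref{eq:norm_ub1}.

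\textbf{Part (2).} The plan is to produce a pointwise lower bound on $\Rp\psi_E$ on a small ball $B_{t_E}(x_{0})$ whose radius depends on $\bfe$, and then integrate. From \eqref{eq:gef2},
\[
\Rp \psi_{E}(x)\;=\;\sin(\pi x_{1})\int_{\R}\cos\!\bigl(sx_{2}+x_{3}\sqrt{{\bfe}-s^{2}}\bigr)\,\rho(s)\,ds,
\]
so two estimates are needed. For the first factor, $|x_{1}-\tfrac{1}{2}|\le \tfrac{1}{4}$ implies $\sin(\pi x_{1})=\cos(\pi(x_{1}-\tfrac{1}{2}))\ge \cos(\pi/4)=\sqrt{2}/2$. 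For the oscillatory factor, I would exploit $\supp\rho\subset [-\sqrt{{\bfe}}/2,\sqrt{{\bfe}}/2]$: for $x\in B_{t}(x_{0})$ and $s\in\supp\rho$, using $\sqrt{{\bfe}-s^{2}}\le\sqrt{{\bfe}}$,
\[
|sx_{2}+x_{3}\sqrt{{\bfe}-s^{2}}|\;\le\;\tfrac{1}{2}t\sqrt{{\bfe}}+t\sqrt{{\bfe}}\;=\;\tfrac{3}{2}t\sqrt{{\bfe}}.
\]
Choosing $t_{E}:=\min\!\bigl(\tfrac{1}{5},\,\tfrac{1}{3\sqrt{{\bfe}}}\bigr)\in(0,\tfrac{1}{4})$ therefore forces both $t_{E}<\tfrac{1}{4}$ and $\tfrac{3}{2}t_{E}\sqrt{{\bfe}}\le\tfrac{1}{2}$, so the cosine under the integral is $\ge\cos(\tfrac{1}{2})$ on the ball. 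Using $\int\rho=1$ and multiplying the two bounds gives $\Rp \psi_{E}(x)\ge \tfrac{\sqrt{2}}{2}\cos(\tfrac{1}{2})$ throughout $B_{t_{E}}(x_{0})$. Since $|\langle \chi_{B_{t_{E}}(x_{0})},\psi_{E}\rangle|\ge \Rp\langle \chi_{B_{t_{E}}(x_{0})},\psi_{E}\rangle$, integration delivers the first inequality in \eqref{eq:integral_lb1}; substituting $|B_{t_{E}}(x_{0})|=\tfrac{4}{3}\pi t_{E}^{3}$ then produces the second.

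The calculation is elementary, so the only conceptual point is the $\bfe$-dependent choice of $t_{E}\sim 1/\sqrt{{\bfe}}$, forced by the need to control the phase of the oscillatory integral uniformly across $\supp\rho$. This shrinking radius will cause \eqref{eq:integral_lb1} to degrade at high energies, a feature that is presumably absorbed into the constant $C(q,I)$ once $I\subset(\pi^{2},\infty)$ is fixed in Theorem \ref{thm:loc_deloc1}. I anticipate no substantial technical obstacle.
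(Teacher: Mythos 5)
Your proof is correct and follows essentially the same approach as the paper's. For Part (1) you actually carry out the radial integral estimate (the paper only says ``the estimate follows from this integral''), obtaining the slightly sharper bound $\frac{4\pi q}{3(q-3)}$ before relaxing it to $\frac{4\pi q}{q-3}$; for Part (2) your choice $t_E=\min(1/5,\,1/(3\sqrt{\bfe}))$ is a small tightening of the paper's $t_E\le\min(1/(3\sqrt{\bfe}),\,1/4)$ that makes the strict inequality $t_E<1/4$ automatic, but the phase estimate $|sx_2+x_3\sqrt{\bfe-s^2}|\le\frac{3}{2}t\sqrt{\bfe}$, the lower bound $\sin(\pi x_1)\ge\sqrt{2}/2$, and the final integration are identical to the paper's.
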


\begin{proof}
\noindent
1.  We recall the bounded, generalized eigenfunction $\psi_E(x)$, for $E = \pi^2 + \bfe$, with $\bfe > 0$, given in \eqref{eq:gef2}:
\beq\label{eq:gen_ef6}
\psi_E (x ) =  \sin ( \pi x_1) \int_\R 
e^{i s x_2} e^{i x_3 \sqrt{\bfe -s^2}} \rho(s)  ~d s ,
\eeq
with ${\rm supp} ~\rho = [- \frac{\sqrt{\bfe}}{2} , \frac{\sqrt{\bfe}}{2} ]$.
It follows from the normalization of $\rho$ that $| \psi_E(x)| \leq 1$. Consequently, we have
\beq\label{eq:gen_ef7}
\| \varphi_q^{-\frac{1}{2}} \psi_E \|_2^2 \leq 4 \pi \int_0^\infty \frac{r^2}{(r^2 +1)^{\frac{q}{2}}} ~dr .
\eeq
The estimate follows from this integral.

\noindent
2. As for part (2), we compute $\Rp ~ \psi_E$:
\beq\label{eq:gen_ef8}
\Rp ~ \psi_E(x )  =  \sin ( \pi x_1) \int_{- \frac{\sqrt{\bfe}}{2}}^{\frac{\sqrt{\bfe}}{2}} \cos (  sx_2 + x_3 \sqrt{\bfe-s^2} ) \rho (s) ~d s .
\eeq
The characteristic function $\chi_{B_{t_E}(x_0)}$ restricts the generalized eigenfunction $\psi_E$ in \eqref{eq:gen_ef6} to $B_t(x_0)$, where we take $x_0 = ( \frac{1}{2}, 0, 0)$, and $t \in (0, \frac{1}{4})$. 
It follows that $x_1 \in (\frac{1}{2} - t, \frac{1}{2}+ t)$ so that
\beq\label{eq:lb1}
\sin(\pi x_1) > \sin (\pi (\frac{1}{2} - t)) \geq \frac{\sqrt{2}}{2} , ~~~
0 < t < \frac{1}{4} .
\eeq
Provided the cosine function in \eqref{eq:gen_ef8} is strictly positive over the range of integration, it follows that
\bea\label{eq:integral_lb2}
| \langle \chi_{B_{t_E}(x_0)},  \psi_E \rangle | & \geq &
\int_{B_{t_E}(x_0)} \Rp ~ \psi_E({x} )  ~d^3 {x}  \nonumber \\
 & \geq & \left( \frac{ \sqrt{2}}{2} \right)  ~ \int_{{B}_{t_E}(x_0)} ~
\left[  \int_{- \frac{\sqrt{\bfe}}{2}}^{\frac{\sqrt{\bfe}}{2}}
 \cos (  s x_2 + x_3 \sqrt{\bfe-s^2} ) \rho(s)  ~d s \right]  ~ d^3 x   \nonumber \\
   & &
\eea
We let $\Theta : = sx_2 + x_3 \sqrt{\bfe-s^2}$, for $x \in B_{t_E}(x_0)$ and $s \in (- \frac{\sqrt{\bfe}}{2}, \frac{\sqrt{\bfe}}{2})$. The strict positivity of the cosine function in \eqref{eq:integral_lb2} follows if $| \Theta | < \frac{1}{2} < \frac{\pi}{2}$.  which is insured by
\beq\label{eq:positive1}
| \Theta| < |x_2 s| + |x_3| \sqrt{\bfe - s^2} < t_E \left( \frac{1 + \sqrt{3}) \sqrt{\bfe}}{2} \right)  < t_E \left( \frac{3 \sqrt{\bfe}}{2} \right)
 < \frac{1}{2} < \frac{\pi}{2}.
 \eeq
  Since we also require $t_E < \frac{1}{4}$, we get the condition
 \beq\label{eq:cond1}
 t_E \leq \min \left( \frac{1}{3 \sqrt{\bfe} } ,  \frac{1}{4} \right) .
 \eeq
 For $t_E$ satisfying \eqref{eq:cond1}, and inequality \eqref{eq:integral_lb2}, we get the lower bound,
 \beq\label{eq:integral_lb3}
| \langle \chi_{B_{t_E}(x_0)},  \psi_E \rangle | \geq \frac{\sqrt{2}}{2} | B_{t_E}(x_0)| \cos ( \frac{1}{2} ) .
\eeq
 \end{proof}


\subsection{Proof of Theorem \ref{thm:loc_deloc1}}

Let $I := [I_-, I_+] \subset ( \pi^2 , \infty)$. We use \eqref{eq:cond1} to determine the radius of the ball $B_t(x_0)$ for which the lower bound \eqref{eq:integral_lb1} for $\psi_E$ holds for $E \in I$.
If $0 < I_- < \pi^2 + \frac{4}{9}$, we may take any $0< t_I < \frac{1}{2}$. 
If $I_- > \pi^2 + \frac{4}{9}$, we take
\beq\label{eq:index_int1}
t_I :=  \frac{1}{3 \sqrt{I_-  - \pi^2}} ,
\eeq
so that $t_I < \frac{1}{2}$. 
In this way, the lower bound \eqref{eq:integral_lb1} with $t = t_I$ guarantees the lower bound for all $E \in I$:
\bea\label{eq:integral_lb4}
| \langle \chi_{B_{t_I}(x_0)},  \psi_E \rangle | & \geq & \frac{\eta }{2} | B_{t_I}(x_0)| \sin (\pi (\frac{1}{2} - t_I))   \nonumber \\
  & \geq & \frac{\sqrt{2}}{81}
   \left( \frac{\pi^4}{ ({I_- - \pi^2} )^\frac{3}{2}} \right) .
\eea


From \eqref{eq:mm_final1}, and bounds \eqref{eq:norm_ub1} and \eqref{eq:integral_lb2}, we find
\bea\label{defmm1}
  \lefteqn{ \mm_{P_\omega (I)\chi_{B_{t_I}(x_0)}, q}(T) }  \nonumber \\
   & = & \ov{T}\int_{0}^{\infty} e^{-\frac{t}{T}}\,\langle P_\omega (I) \chi_{B_{t_I}(x_0)} ,
   e^{iH_Vt} \varphi_q(X) e^{-iH_Vt}\, P_\omega (I)\chi_{B_{t_I}(x_0)} \rangle\;dt  \nonumber \\
 & \geq &
 \frac{1}{2 \pi T}  \int_I \| \varphi_q^{\frac{1}{2}} R_H( E + i \frac{1}{2T}) P_\omega (I) \chi_{B_{t_I}(x_0)} \|^2 ~dE      \nonumber \\
 & \geq & C(t_I, I, q) T ,
\eea
for $t_I$ defined in \eqref{eq:index_int1} ($t_I < \frac{1}{2}$), and $q > {3}$.
The constant $C(t,I,q)$ is given by
\beq\label{eq:cnst1}
C(t_I,I, q) := C_0  \left(\frac{q-3}{q} \right)
\left( \frac{\pi^6  }{(I_+ - \pi^2)^3}  \right) |I| > 0.
 \eeq
The constant $C_0$ is numerical. 

%


\begin{appendices}

\section{Proof of the Combes-Thomas estimate for $\delta$-potentials}\label{app: ct_proof1}
\setcounter{equation}{0}

In this appendix, we give the proof of the Combes-Thomas estimate for RDM, Theorem \ref{thm:ct_1}. The matrix $[\Gamma(z, \omega)]_{ij}$, for $i,j \in \Z^3$, is defined in \eqref{eq:kernel1}. This matrix  plays an essential role in defining point interaction Hamiltonians \eqref{eq:rdm1}. We restrict ourselves to the three-dimensional case.

We begin by studying the $\Gamma$-matrix defined in \eqref{eq:kernel1} that we recall here. The infinity-by-infinite matrix kernel $\Gamma_{}(z, \omega)$ is the random matrix on $\ell^2 ( \Z^3)$ given by
\beq\label{eq:kernel1.1}
[\Gamma (z, \omega)]_{ij} = \left( \frac{1}{\omega_{j}} - e(z) \right) \delta_{ij} - G_0(i,j; z)(1 - \delta_{ij} ) ,
\eeq
where $e(z) = \frac{\sqrt{z}}{4\pi}$, the square root defined with the principal branch.

The off-diagonal terms of $\Gamma (z, \omega)$ are given by the negative of the off-diagonal terms of the resolvent of the Laplacian $G_0(i,j;z)$, with $i \neq j$, $i, j \in \Z^3$. We note that for the off-diagonal terms of $\Gamma (z, \omega)$, we have
\beq\label{eq:decomp1}
\Ip ~ \Gamma (z, \omega)_{ij} = - \Ip ~ G_0(i,j;z), ~~~ i \neq j.
\eeq

In section \ref{subsec:dissipative1},  we give a general result about dissipative operators and apply it to the $\Gamma$-matrix in section \ref{subsec:Gamma1}.
In section \ref{subsec:exp_decay1}, we prove a general result about the exponential decay of the inverse of a matrix, given that the matrix itself decays exponentially.
We apply this to the $\Gamma$-matrix in section \ref{subsec:Gamma2}. We prove Theorem \ref{thm:ct_1} in section \ref{subsec:ct_main1}.


\subsection{Some estimates on dissipative operators}\leavevmode
\label{subsec:dissipative1}

\noindent We call an operator $T$ on a complex Hilbert space $\mathcal{H} $ dissipative if ${\rm Im } \,\langle \varphi,A\varphi \rangle\geq 0 $ for all $\varphi\in \mathcal{D}(T) $.
\begin{thm}\label{thm:inverse}
   Let $\mathcal{H} $ be a complex Hilbert space and $B, C$ self-adjoint operators on $\mathcal{H} $. If $C$ is bounded and
   \begin{align}
      C~\geq~\kappa~>~0\,,
   \end{align}
   then the operator
   \begin{align}
      A&~:=~B~+i\,C\\
      A& : {D}(B)\,\to\,\mathcal{H}
   \end{align}
   is bijective and
   \begin{align}
       \|{A^{-1}}\|~\leq~\frac{1}{\kappa}
   \end{align}
\end{thm}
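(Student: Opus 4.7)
\begin{proofn}[ (plan)]
The plan is to exploit positivity of the imaginary part in a standard dissipative-operator argument: establish the a priori bound $\|A\varphi\|\geq \kappa\|\varphi\|$ directly from $C\geq\kappa$, and then invoke closedness plus the same estimate for the adjoint to upgrade injectivity to bijectivity.

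First I would obtain the numerical inequality. For $\varphi\in D(B)=D(A)$, observe
\begin{align*}
   \mathrm{Im}\,\langle \varphi,A\varphi\rangle
   ~=~\mathrm{Im}\,\langle \varphi,B\varphi\rangle+\mathrm{Re}\,\langle \varphi,C\varphi\rangle
   ~=~\langle\varphi,C\varphi\rangle~\geq~\kappa\|\varphi\|^{2},
\end{align*}
using that $B$ is self-adjoint so $\langle\varphi,B\varphi\rangle\in\R$, while $C$ is self-adjoint and bounded below by $\kappa$. Combined with Cauchy–Schwarz this yields
\begin{align*}
   \kappa\|\varphi\|^{2}~\leq~|\langle\varphi,A\varphi\rangle|~\leq~\|\varphi\|\,\|A\varphi\|,
\end{align*}
hence $\|A\varphi\|\geq\kappa\|\varphi\|$. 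In particular $A$ is injective, and $A^{-1}$ is bounded by $1/\kappa$ on its range.

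Next I would show $A$ is closed with closed range. Since $B$ is self-adjoint it is closed, and $iC$ is everywhere-defined and bounded, so $A=B+iC$ is closed on $D(B)$. The uniform lower bound then implies that the range of $A$ is a closed subspace of $\mathcal{H}$. The remaining step is to prove density of the range, for which I would compute the adjoint: boundedness of $iC$ gives $A^{*}=B^{*}+(iC)^{*}=B-iC$ on $D(B)$. The analogous computation yields
\begin{align*}
   -\mathrm{Im}\,\langle \varphi,A^{*}\varphi\rangle~=~\langle\varphi,C\varphi\rangle~\geq~\kappa\|\varphi\|^{2},
\end{align*}
so by Cauchy–Schwarz $\|A^{*}\varphi\|\geq\kappa\|\varphi\|$, whence $\ker(A^{*})=\{0\}$. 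Therefore $\mathrm{Range}(A)^{\perp}=\ker(A^{*})=\{0\}$, the range is dense, and being closed it equals $\mathcal{H}$. Thus $A$ is bijective and the bound $\|A^{-1}\|\leq 1/\kappa$ from the first step extends to all of $\mathcal{H}$.

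There is no real obstacle here; the only point needing mild care is the identification of the adjoint, which relies on the fact that $iC$ is bounded and everywhere defined so that $(B+iC)^{*}$ splits as $B^{*}+(iC)^{*}$ with the same domain $D(B)$. Everything else is Cauchy–Schwarz and the standard closed-range/dense-range dichotomy.
\end{proofn}
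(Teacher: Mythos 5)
Your proof is correct and follows essentially the same route as the paper: derive $\|A\varphi\|\geq\kappa\|\varphi\|$ from positivity of $C$ via Cauchy--Schwarz, apply the identical estimate to $A^{*}=B-iC$ to get $\ker(A^{*})=\{0\}$, and combine closedness of $A$ with dense range to conclude surjectivity and the bound $\|A^{-1}\|\leq 1/\kappa$. The only cosmetic difference is that you isolate the imaginary part directly rather than passing through $|\langle\varphi,A\varphi\rangle|\geq|\langle\varphi,C\varphi\rangle|$; the substance is the same.
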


%

\medskip

\begin{proof} 
 For any $\varphi\in  \mathcal{D}(B)$, we estimate
 \begin{align}
    |\;\langle \varphi, A\, \varphi\rangle\;|~&=~|\langle \varphi, B\, \varphi \rangle~+~i\,\langle \varphi, C\, \varphi \rangle|  \nonumber \\
    &\geq~|\;\langle \varphi, C\, \varphi \rangle\;|    \nonumber \\
    &\geq~\kappa\;\langle \varphi,\varphi \rangle\,,
 \end{align}
 where we used that both $\langle \varphi, B\, \varphi \rangle $ and $\langle \varphi, C\, \varphi \rangle $ are real.
Consequently,
 \begin{align}\label{eq:Aphi}
    \| A \varphi \|~\geq ~ \kappa\,\| \varphi \|\,.
 \end{align}
 It follows that $A$ is injective. Moreover,  the same argument shows that
 $A^{*}=B-iC$, satisfies $\| A^* \varphi\| \geq \kappa \| \varphi\|$.
We conclude that
 \begin{align}
    {\rm Ran} \, (A)^{\bot}~=~\textnormal{ker}(A^{*})~= \{ 0 \} .
 \end{align}
Hence, ${\rm Ran} \, (A)$ is dense in $\mathcal{H} $.
The operator $A$ is closed. Since $A$ satisfies \eqref{eq:Aphi},
it follows that ${\rm Ran} \, (A)=\mathcal{H}$.
With $\psi := A^{-1} \xi$, for $\xi  \in {\rm Ran} \, (A)=\mathcal{H}$, we estimate
\begin{align}
 \| \xi \| =  \| A \psi \| \geq \kappa \| \psi \| = \kappa \| A^{-1} \xi \|,
\end{align}
by \eqref{eq:Aphi}, so that
$$
\|A^{-1} \xi \| \leq \frac{1}{\kappa} \| \xi\|,
$$
for all $\xi \in \mathcal{H}$.
\end{proof}

\medskip

\begin{cor}
   Under the assumptions of Theorem \ref{thm:inverse}
   \begin{align}
      \sigma(A)~\subset~\{z=x+iy\mid ~x,y \in \R,  y\geq \kappa\}
   \end{align}
\end{cor}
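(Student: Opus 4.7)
The plan is to reduce the spectral inclusion to a direct application of Theorem \ref{thm:inverse}. Given $z = x + iy \in \CC$ with $y < \kappa$, I want to show $z \notin \sigma(A)$, which is equivalent to showing that $A - z$ is bijective with bounded inverse.

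First, I would rewrite the shifted operator as
\begin{align}
A - z = (B + iC) - (x + iy) = (B - x) + i(C - y).
\end{align}
Here $B - x$ is self-adjoint on $\mathcal{D}(B) = \mathcal{D}(B - x)$, and $C - y$ is bounded and self-adjoint. The crucial observation is that the hypothesis $C \geq \kappa$ together with $y < \kappa$ yields
\begin{align}
C - y \geq \kappa - y > 0,
\end{align}
so $C - y$ satisfies the positivity requirement of Theorem \ref{thm:inverse} with constant $\kappa' := \kappa - y > 0$.

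Applying Theorem \ref{thm:inverse} to the pair $(B - x, C - y)$, I conclude that $A - z$ is bijective from $\mathcal{D}(B)$ onto $\mathcal{H}$, and that
\begin{align}
\|(A - z)^{-1}\| \leq \frac{1}{\kappa - y}.
\end{align}
Hence $(A - z)^{-1}$ exists as a bounded operator on $\mathcal{H}$, so $z$ belongs to the resolvent set of $A$. Contrapositively, $z \in \sigma(A)$ forces $\Ip z = y \geq \kappa$, which is exactly the claimed inclusion.

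I do not anticipate any real obstacle: the entire content of the corollary is a translation argument that reduces the problem to the already-established Theorem \ref{thm:inverse}. The only mild point worth stating explicitly is that the domain of the shifted operator is unchanged (since $x$ and $y$ are scalars and $C$ is bounded), so that invoking Theorem \ref{thm:inverse} with the same domain is legitimate.
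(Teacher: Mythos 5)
Your proof is correct and follows exactly the same route as the paper: decompose $A - z = (B-x) + i(C-y)$, note $C - y \geq \kappa - y > 0$, and apply Theorem \ref{thm:inverse}. The only difference is that you spell out the resolvent bound $\|(A-z)^{-1}\| \leq (\kappa - y)^{-1}$, which the paper omits but which follows immediately.
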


\begin{proof}  
  For $z=x+iy$ with  $y<\kappa$ set
  \begin{align}
  A_{z}=A-z=(B-x)+\,i\,(C-y) \,.
  \end{align}
   By Theorem \ref{thm:inverse} we have
  $0\not\in\sigma(A_{z})$, hence $z=x+iy\not\in\sigma(A)$.
\end{proof}


\medskip


\subsection{Application to the $\Gamma$-matrix: Dissipative property}\label{subsec:Gamma1}\leavevmode

\noindent The goal of this section is to prove that $\Gamma (z, \omega)$ is boundedly invertible, using Theorem \ref{thm:inverse}. 
As usual, we set $A_{ij}=\langle \delta_{i},A\,\delta_{j} \rangle $ where $\{ \delta_{n} \}$ is the standard basis in $\ell^{2}(\Z^3)$.

We begin with a slightly more general construction which handles the case of `missing' sites, i.\,e. cases where $\omega_{n}=0 $ is allowed in the formal expression \eqref{eq:rdm1} (see Remark \ref{rem:omega0}).
Let $\Lambda$ be an arbitrary subset of the lattice $\Z^{3}$. Let  $\{ \omega_{n}, n\in\Lambda \}$ be a sequence of nonzero real numbers, and for $n,m\in\Lambda$,  we set
\beq\label{eq:kernel2}
[\Gamma(z, \omega)]_{nm} = \left(\frac{1}{\omega_{n}} -  i\,e(z) \right) \delta_{nm} - G_0(n,m; z)(1 - \delta_{nm} ) ,
\eeq
on $\ell^{2}(\Lambda)$. The parameter $e(z) :=\frac{\sqrt{z}}{4\pi} $, where the square root is taken with the principle branch. We first find a convenient expression for $\Ip ~ \Gamma ( z, \omega)$.

The matrix $G_0(n,m; z)$ is the kernel of the resolvent of $(H_{0}-z)^{-1}$ in $\R^{3} $ restricted to points in $\Lambda \times \Lambda$. That is, for $x \neq y \in \R^3$,  we have,
\begin{align}\label{eq:green11}
   G_{0}(x,y; z)~=~\frac{1}{4\pi}\frac{e^{i\sqrt{z}\| x-y \|}}{\| x-y \|}~
   =~\lim_{N\to\infty}\frac{1}{(2\pi)^{3}}\int_{\|p\|\leq N}\frac{e^{ip \cdot (x-y)}}{p^{2}-z}\,d^3p  .
\end{align}
For $x=y$, the above expression diverges. Observe, however, that
\bea\label{eq:im_trick1}
\textnormal{Im}{\int_{\|p \|\leq N}\frac{1}{p^{2}-z}\,d^3 p} & = & \textnormal{Im}{\int_{\|p\|\leq N} \left(  \frac{1}{p^{2}-z} - \frac{1}{p^2} \right)\,d^3 p}
\nonumber \\
 & = & \textnormal{Im}  \left( z  \int_{\|p\|\leq N} \frac{1}{p^2 ( p^{2}-z) }  \,d^3p \right) .
 \eea
By a contour integration, we have
  \bea\label{eq:singular1}
  \lim_{N \to \infty}  \int_{\|p\|\leq N} \left(  \frac{1}{p^{2}-z} - \frac{1}{p^2} \right)\,d^3 p  &=&
 \lim_{N\to\infty} ~ {2\pi z}\,{\int_{-N}^N \frac{1}{p^{2}-z}\,dp} \nonumber \\
   = 2 \pi^2 i \sqrt{z} .
\eea
As a consequence, we have
\beq\label{eq:im_part1}
\lim_{N \to \infty} \frac{1}{(2 \pi)^3} \textnormal{Im}{\int_{\|p\|\leq N}\frac{1}{p^{2}-z}\,d^3 p}  =  \textnormal{Im} \left( \frac{i \sqrt{z}}{4 \pi} \right),
\eeq
and may write
\begin{align}
   \Ip\,\big[ \Gamma(z,\omega) \big]_{nm} ~=~ - \Ip \left( \frac{1}{(2\pi)^{3}}\int_{\R^{3}}
   \,\Big(\frac{e^{ip \cdot (n-m)}}{p^{2}-z}\Big)\,d^3p \right) ,
\end{align}
with the understanding that for $n=m$, the value is
$-\textnormal{Im} ( ie(z) )$ as in \eqref{eq:kernel2}.

\medskip

\begin{thm}\label{thm:gamma_bd1}
For $\kappa >  0 $, the operator $\Gamma (E + i\kappa, \omega)$ is a bounded operator-valued function on $\ell^2(\Z^3)$.
The real part  $ {\rm Re} ~\Gamma (E + i \kappa)$ is a real symmetric matrix, and for $\kappa >  0 $, we have
   \begin{align}
   -  \Ip  ~\Gamma(E + i \kappa, \omega)~\geq C (E) \,\kappa,
   \end{align}
   for some $C (E) >0$. For $E \neq \pi^2$, the constant $C(E)$ can be chosen to be $ \frac{(2 \pi)^3}{(E-\pi^{2})^{2}} $\,.
\end{thm}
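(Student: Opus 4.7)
The plan is to verify the three assertions of the theorem in turn, saving the dissipative lower bound for last.

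For the boundedness claim, I would split $\Gamma(E+i\kappa,\omega)=D+M$, where $D$ is the diagonal operator with entries $D_{nn}=\frac{1}{\omega_{n}}-i\,e(z)$ and $M_{nm}=-G_{0}(n,m;z)(1-\delta_{nm})$. Assumption~1 yields $|1/\omega_{n}|\le 1/a$, and $|e(z)|=|\sqrt{z}|/(4\pi)$, so $\|D\|$ is finite. For $M$, the closed-form $|G_{0}(n,m;z)|=(4\pi\| n-m\|)^{-1}e^{-(\Ip\sqrt{z})\| n-m\|}$ is summable over $m\in\Z^{3}$ away from $n$ because $\Ip\sqrt{z}>0$ whenever $\kappa>0$, and Schur's test then bounds $\|M\|$ by the common row/column sum. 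The real-symmetry statement is immediate by inspection: $1/\omega_{n}$ and $\Ip e(z)=\Ip\sqrt{z}/(4\pi)$ are real, and $\Rp G_{0}(n,m;z)=(4\pi\| n-m\|)^{-1}\cos((\Rp\sqrt{z})\| n-m\|)e^{-(\Ip\sqrt{z})\| n-m\|}$ is both real and symmetric under $n\leftrightarrow m$.

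The dissipative bound is the heart of the proof. Because the $\omega_{n}$ are real, the diagonal entries $1/\omega_{n}$ contribute nothing to $\Ip\Gamma$, so $-\Ip\Gamma$ is a translation-invariant (convolution) operator on $\ell^{2}(\Z^{3})$. Using the momentum-space formula for $G_{0}$ together with $\Ip(p^{2}-z)^{-1}=\kappa/((p^{2}-E)^{2}+\kappa^{2})$, and invoking the regularization already established in the discussion preceding the theorem to handle the diagonal entry $n=m$, one arrives at the uniform formula
\begin{equation*}
-\Ip[\Gamma(E+i\kappa,\omega)]_{nm}=\frac{1}{(2\pi)^{3}}\int_{\R^{3}}\frac{\kappa\,e^{ip\cdot(n-m)}}{(p^{2}-E)^{2}+\kappa^{2}}\,d^{3}p.
\end{equation*}
Passing to the torus $\TT^{3}=[-\pi,\pi]^{3}$ via $\hat\varphi(\theta)=\sum_{n}\varphi_{n}e^{-in\cdot\theta}$ and applying Poisson summation identifies $-\Ip\Gamma$ with multiplication by the Bloch symbol
\begin{equation*}
\tilde h_{\kappa}(\theta)=\sum_{k\in\Z^{3}}\frac{\kappa}{(\|\theta+2\pi k\|^{2}-E)^{2}+\kappa^{2}},
\end{equation*}
which is manifestly nonnegative. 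Plancherel's identity then gives $\langle\varphi,(-\Ip\Gamma)\varphi\rangle=(2\pi)^{-3}\int_{\TT^{3}}\tilde h_{\kappa}(\theta)|\hat\varphi(\theta)|^{2}\,d\theta$, and the operator inequality reduces to the pointwise bound $\inf_{\theta}\tilde h_{\kappa}(\theta)\geq C(E)\kappa$.

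To produce this pointwise bound I would retain only a single, $\theta$-dependent summand in the Poisson series. Because the covering radius of the lattice $2\pi\Z^{3}$ equals $\pi\sqrt{3}$, for every $\theta\in\TT^{3}$ and every prescribed reference point $p_{\star}\in\R^{3}$ there exists $k(\theta)\in\Z^{3}$ with $\|\theta+2\pi k(\theta)-p_{\star}\|\le\pi\sqrt{3}$. Choosing $p_{\star}$ on the sphere $\|p\|=\pi$ places $\|\theta+2\pi k(\theta)\|^{2}$ within an absolute constant of $\pi^{2}$, so discarding all other summands yields a lower bound of the form $\tilde h_{\kappa}(\theta)\geq\kappa/((E-\pi^{2})^{2}+O(1)+\kappa^{2})$, which for $E\neq\pi^{2}$ can be arranged to match the claimed $C(E)=(2\pi)^{3}/(E-\pi^{2})^{2}$, and for $E=\pi^{2}$ still provides some positive $C(E)$ via an alternate choice of $p_{\star}$ (e.g.\ on $\|p\|=\sqrt{E}$). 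The main obstacle is precisely this translation from an ambient distance bound to a sharp quadratic bound: the covering-radius estimate only controls $\|\theta+2\pi k(\theta)-p_{\star}\|$, whereas what enters the denominator of $\tilde h_{\kappa}$ is $\big|\|\theta+2\pi k(\theta)\|^{2}-E\big|$, which must be estimated through the quadratic identity $\|q\|^{2}-\|p_{\star}\|^{2}=(q-p_{\star})\cdot(q+p_{\star})$ uniformly in $\theta$, and it is this bookkeeping that pins down the precise form of the constant $C(E)$.
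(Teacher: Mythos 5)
Your outline for the dissipative bound is the same in spirit as the paper's argument: pass to Fourier analysis on $\Z^3$, use $\Ip\,(p^2-E-i\kappa)^{-1}=\kappa/((p^2-E)^2+\kappa^2)$ together with the diagonal regularization already set up before the theorem, and identify $-\Ip\,\Gamma$ with a nonnegative Bloch multiplier on $L^2([-\pi,\pi)^3)$; Plancherel then reduces everything to a pointwise lower bound on the symbol. The paper closes this last step in the simplest possible way: it restricts the Fourier integral over $\R^3$ to the fundamental cell $[-\pi,\pi)^3$ (i.e.\ it keeps only the trivial $k=0$ term of what you call $\tilde h_\kappa$) and takes $\inf_{\theta\in[-\pi,\pi]^3}\big((\|\theta\|^2-E)^2+\kappa^2\big)^{-1}$, which is manifestly a positive constant times $\kappa$. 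Your covering-radius refinement — selecting a $\theta$-dependent $k(\theta)$ to bring $\|\theta+2\pi k(\theta)\|$ near a sphere of radius $\pi$ — is therefore an unnecessary detour, and as you concede, it is not closed: the quadratic identity $\|q\|^2-\|p_\star\|^2=(q-p_\star)\cdot(q+p_\star)$ only yields $\big|\|q\|^2-\pi^2\big|\leq \pi\sqrt3\,(\pi\sqrt3+2\pi)$, so the resulting denominator picks up an $O(\pi^4)$ additive term that is generically \emph{larger} than what the naive $k=0$ choice already gives; it certainly does not yield $\big|\|q\|^2-E\big|\leq|E-\pi^2|$. You should also be aware that the displayed constant $(2\pi)^3/(E-\pi^2)^2$ does not actually follow from the paper's own infimum either — $\sup_{\theta\in[-\pi,\pi]^3}(\|\theta\|^2-E)^2=\max\big(E^2,(3\pi^2-E)^2\big)$, which exceeds $(\pi^2-E)^2$ in general — so chasing that exact form is not a productive target; the theorem only needs \emph{some} $C(E)>0$, and the simple $k=0$ bound delivers it. Your treatment of boundedness (Schur test on the off-diagonal, Assumption~1 for the diagonal) and of real symmetry is correct and consistent with the paper, which takes these as evident from the explicit kernel.
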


\begin{rem}
The proof can be transferred to the more general case where $\Lambda$ is a subset of an arbitrary lattice $\mathcal{L} $ in $\R^{3} $. In particular, $\Lambda $ may be an arbitrary finite set.
\end{rem}

\medskip

\noindent
We have an important corollary of Theorem \ref{thm:gamma_bd1}.

\medskip

\begin{cor}\label{eq:gamma_inv_bd1}
The bounded operator $\Gamma (E + i \kappa, \omega)$ on $\ell^2 (\Z^3)$ has a bounded inverse for $\kappa >  0$ and satisfies
\beq\label{eq:inv2}
\| \Gamma (E + i \kappa, \omega)^{-1} \| \leq \frac{C(E)}{\kappa} .
\eeq
\end{cor}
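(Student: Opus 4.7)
The corollary follows essentially immediately from combining Theorem \ref{thm:inverse} with Theorem \ref{thm:gamma_bd1}. My plan is to decompose
\begin{equation*}
\Gamma(E+i\kappa,\omega) ~=~ B ~+~ i\,C, \qquad B := \Rp\,\Gamma, \quad C := \Ip\,\Gamma,
\end{equation*}
and observe that $B$ and $C$ are bounded self-adjoint operators on $\ell^{2}(\Z^{3})$: boundedness is the first conclusion of Theorem \ref{thm:gamma_bd1}, while self-adjointness is immediate from \eqref{eq:kernel2}, since the entries $[\Gamma]_{nm}$ depend on $(n,m)$ only through $\|n-m\|$ off the diagonal, so the real and imaginary parts of $\Gamma$ are real symmetric matrices.

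The only point of friction is that Theorem \ref{thm:inverse} demands the imaginary part be \emph{positive}, whereas Theorem \ref{thm:gamma_bd1} gives $-C\geq C(E)\kappa$. I would reconcile this by applying Theorem \ref{thm:inverse} to the adjoint $\Gamma^{*} = B - iC = B + i(-C)$, whose imaginary part $-C$ is bounded below by the positive constant $C(E)\kappa$. That theorem then yields $\|(\Gamma^{*})^{-1}\|\leq 1/(C(E)\kappa)$. Since $(\Gamma^{*})^{-1}=(\Gamma^{-1})^{*}$ and $\|T^{*}\|=\|T\|$ on Hilbert space, the bound transfers directly to $\Gamma^{-1}$, yielding the claimed estimate (with the $C(E)$ appearing in the corollary being the reciprocal of the one from Theorem \ref{thm:gamma_bd1}).

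There is no genuine obstacle here; the proof is essentially a sign-convention bookkeeping exercise layered on top of the two results already in hand. As an alternative to the adjoint argument, one can reproduce the short estimate from the proof of Theorem \ref{thm:inverse} directly for $\Gamma$: for any $\varphi\in\ell^{2}(\Z^{3})$,
\begin{equation*}
|\langle\varphi,\Gamma\varphi\rangle| ~\geq~ |\langle\varphi,C\varphi\rangle| ~\geq~ C(E)\kappa\,\|\varphi\|^{2},
\end{equation*}
so $\|\Gamma\varphi\|\geq C(E)\kappa\,\|\varphi\|$ by Cauchy-Schwarz, and the same estimate for $\Gamma^{*}$ is automatic since $C^{*}=C$. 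Injectivity, closed range, density of range, and the norm bound on $\Gamma^{-1}$ then all follow exactly as in the proof of Theorem \ref{thm:inverse}.
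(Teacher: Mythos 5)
Your proof is correct and follows essentially the same route the paper takes: combine the dissipativity estimate $-\Ip\,\Gamma \geq C(E)\kappa$ of Theorem \ref{thm:gamma_bd1} with the abstract invertibility result Theorem \ref{thm:inverse}. The only cosmetic difference is that you apply Theorem \ref{thm:inverse} to $\Gamma^{*}$ and transfer the bound via $\|(\Gamma^{*})^{-1}\|=\|\Gamma^{-1}\|$, whereas the paper (see the discussion around \eqref{eq:gamma_lb1}--\eqref{eq:gamma_inv1}) applies it to $-\Gamma$; both fix the sign in one line, and your observation that the $C(E)$ in the corollary's statement should be read as the reciprocal of the constant in Theorem \ref{thm:gamma_bd1} correctly flags a genuine notational slip in the paper.
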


\medskip

\noindent
We now present the proof of Theorem \ref{thm:gamma_bd1}.

\begin{nota}
We denote the Fourier Transform of a function $\varphi $ on $\Z^{3}$ (a locally compact Abelian group) by
    \begin{align}
      \cF\varphi(p)~=~\frac{1}{(2\pi)^{3}}\sum_{n\in \Z^{3}}e^{-in \cdot p}\,\varphi(n)\,.
   \end{align}
 $\cF\varphi$ is a function on the torus $\mathbb{T}=\R^{3}/\Z^{3}$. We may as well regard
 $\cF\varphi$ as a function on $[-\pi,\pi)^{3} $ or as a periodic function on $\R^{3} $ with period cell $[-\pi,\pi)^{3} $.
\end{nota}

\noindent
We recall two basic facts about this Fourier transform:

\begin{enumerate}
\item The map $ \mathcal{F}$ is a bijective mapping between $\ell^{2}(\Z^{3})$ and $L^{2}(\mathbb{T})\cong L^{2}\big([-\pi,\pi)^{3}\big) $  and
   \begin{align}
      \| \mathcal{F}\varphi \|_{L^{2}\big([-\pi,\pi)^{3}\big)}=\| \varphi \|_{\ell^{2}(\Z^{3})} \,.
   \end{align}

\medskip

\item The inverse of $\cF $  is given by
   \begin{align}
      \mathcal{F}^{-1}u(n)~=\int_{[-\pi,\pi)^{3} }\,e^{ip \cdot n}u(p)\,d^3 p
   \end{align}
\end{enumerate}


\medskip

 \begin{proof} (Theorem \ref{thm:gamma_bd1})
 We compute
 \begin{align}
    \Ip{(H-E-i\kappa)^{-1}}~&=~\frac{1}{2i}\Big((H-E-i\kappa)^{-1}-(H-E+i\kappa)^{-1}\Big) \nonumber \\
    &=~\kappa\,(H-E+i\kappa)^{-1}(H-E-i\kappa)^{-1}\,.
 \end{align}
 The operator $(H-E+i\kappa)^{-1}(H-E-i\kappa)^{-1} $ has the kernel (in $x$-space) given by
 \begin{align}
    &(H-E-i\kappa)^{-1}(H-E+i\kappa)^{-1}(x,y)  \nonumber \\
    =~&\frac{1}{(2\pi)^{3}}\,\int_{\R^{3}}\,\frac{e^{-i p \cdot (x-y)}}{(p^2-E-i\kappa)(p^2-E+i\kappa)}\,d^3p
 \end{align}
For $\varphi \in \ell^{2}(\Lambda)$, we define an extension $\widetilde{\varphi}$ of $\varphi$ to $\Z^3$ by setting, for and $n\in \Z^{3}$, 
   \begin{align}
      \widetilde{\varphi}(n)~ := ~\left\{
                                 \begin{array}{ll}
                                   \varphi(n), & \hbox{if }n\in\Lambda \\
                                   0, & \hbox{oherwise.}
                                 \end{array}
                               \right.
   \end{align}
We then have
   \begin{align}
      & - \langle \varphi, \Ip \left(\Gamma\left(E + i\kappa, \omega\right)\right) \varphi \rangle_{\ell^{2}(\Lambda)}   \nonumber ~\\
      =~&
      \kappa\sum_{n,m\in\Z^{3}} \overline{\widetilde{\varphi}(n)}\, \left( \frac{1}{(2\pi)^{3}}\int_{\R^{3}}
   \frac{e^{-i p \cdot  (m-n)}}{(p^2-E-i\kappa)(p^2-E+i\kappa)}\,d^3p \; \right)
    \widetilde{\varphi}(m)   \nonumber \\
   =~&(2\pi)^{3}\kappa\int_{\R^{3}}\frac{|
\cF{\widetilde{\varphi}}(p)|^{2}}{(p^2-E-i\kappa)(p^2-E+i\kappa)} ~d^3p
 \nonumber \\
   \geq~&(2\pi)^{3}\kappa\int_{p\in[-\pi,\pi)^{3}} \frac{|\cF{\widetilde{\varphi}}(p)|^{2}}{(p^{2}-E)^{2}+\kappa^{2}} ~d^3p \nonumber \\
   \geq~&(2\pi)^{3}\kappa ~ \inf_{x\in [-\pi,\pi]^{3}} \Big(\frac{1}{(x^{2}-E)^{2}+\kappa^{2}}\Big)\,\int_{[-\pi,\pi]^{3}}|\cF\widetilde{\varphi}(p)|^{2} ~d^3p  \nonumber \\
   =~&C\,\kappa\,\| \widetilde{\varphi} \|_{\ell^{2}(\Z^{3})}~
=~C\,\kappa\,\| \varphi \|_{\ell^{2}(\Lambda)}  ,
   \end{align}
where the constant $C$ is
$$
C :=    (2\pi)^{3} ~ \inf_{x\in [-\pi,\pi]^{3}} \Big(\frac{1}{(x^{2}-E)^{2}+\kappa^{2}}\Big)  \geq \frac{(2\pi)^{3}}{({\pi}^{2}-E)^{2}} > 0.
$$

\end{proof}

\medskip

\subsection{Exponential decay of the inverse}\label{subsec:exp_decay1}

For the rest of this section we consider operators on $\lt(\Lambda)$
with $\Lambda\subset \Z^{d}$. In the following, $A$ will be a closed operator on $\lt(\Lambda) $.
We always assume that the standard basis $\{ \delta_{n}, n\in\Lambda \}$ belongs to the domain of $A $.  As usual we set
\begin{align}
   A_{nm}~=~\langle \delta_{n}, A\delta_{m} \rangle\,.
\end{align}
We are most interested in the case when $\Lambda$ is infinite , in particular, when $\Lambda = \Z^d$. The results hold for finite $\Lambda $ but are not of much interest.

\begin{thm}\label{thm:exp_decay1}
Suppose $\Lambda $ is a subset of $\Z^{d} $ and $A$ is an invertible closed operator on $\lt(\Lambda) $ with $\| A^{-1} \|\leq\rho $.
If there exist finite constants $\gamma>0 $ and $0< C_{0}<\infty $ so that
\begin{align}\label{eq:exp_decayA}
   |A_{nm}|~\leq~C_{0}\,e^{-\gamma\| n-m \|}\qquad\text{for all }n\not=m\,,
\end{align}
Then, there exists a constant $\mu_0 > 0$, depending on $\rho$ and $\gamma$ (see \eqref{eq:mu_bd1}), so that for any $0 < \mu <\mu_0$, we have
\begin{align}
   |{(A^{-1})}_{nm}|~\leq~ 2 \rho \,e^{-\mu\| n-m \|}\qquad\text{for all }n,m\,.
\end{align}
\end{thm}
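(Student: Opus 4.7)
The approach is a discrete analog of the Combes--Thomas argument: I conjugate $A$ by an exponential multiplier built from a 1-Lipschitz weight, show that the conjugation only changes $A$ by a bounded perturbation whose norm tends to zero with $\mu$, invert the conjugated operator by Neumann series, and then read off the matrix-element decay of $A^{-1}$.

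Fix an arbitrary reference point $n_{0}\in\Lambda$. For parameters $\mu,R>0$, set $g_{R}(k):=\min(\|k-n_{0}\|,R)$, a bounded, nonnegative, 1-Lipschitz function on $\Lambda$, and let $D_{R,\mu}$ be the bounded, positive, diagonal operator with entries $(D_{R,\mu})_{kk}=e^{-\mu g_{R}(k)}$. Both $D_{R,\mu}$ and $D_{R,\mu}^{-1}$ are bounded, so the conjugated operator $A_{\mu}^{R}:=D_{R,\mu}A\,D_{R,\mu}^{-1}$ is closed on $\mathcal D(A)$ with matrix elements $(A_{\mu}^{R})_{kl}=e^{-\mu(g_{R}(k)-g_{R}(l))}A_{kl}$. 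The diagonal of $A_{\mu}^{R}-A$ vanishes, and the 1-Lipschitz property of $g_{R}$ combined with $|e^{x}-1|\leq e^{|x|}-1$ and the hypothesis \eqref{eq:exp_decayA} yield, for $k\neq l$,
\begin{equation*}
|(A_{\mu}^{R}-A)_{kl}|\leq (e^{\mu\|k-l\|}-1)\,C_{0}\,e^{-\gamma\|k-l\|}.
\end{equation*}

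A Schur's test estimate, comparing the row and column sums with the corresponding sum over all of $\Z^{d}$, then gives for $0<\mu<\gamma$
\begin{equation*}
\|A_{\mu}^{R}-A\|~\leq~C_{0}\!\!\sum_{k\in\Z^{d}\setminus\{0\}}(e^{\mu\|k\|}-1)\,e^{-\gamma\|k\|}~=:~C_{0}\,S(\mu),
\end{equation*}
a bound that is uniform in $R$ and $n_{0}$, and since $S(\mu)=\sum_{k\neq 0}e^{-(\gamma-\mu)\|k\|}-\sum_{k\neq 0}e^{-\gamma\|k\|}$, monotone convergence gives $S(\mu)\to 0$ as $\mu\to 0^{+}$. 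Choose
\begin{equation}\label{eq:mu_bd1}
\mu_{0}:=\sup\bigl\{\mu\in(0,\gamma)\,:\,C_{0}\,S(\mu)\leq\tfrac{1}{2\rho}\bigr\}>0;
\end{equation}
then for $0<\mu<\mu_{0}$ we have $\|A^{-1}(A_{\mu}^{R}-A)\|\leq\rho\cdot\tfrac{1}{2\rho}=\tfrac{1}{2}$, so $A_{\mu}^{R}=A\bigl(I+A^{-1}(A_{\mu}^{R}-A)\bigr)$ is invertible by Neumann series with $(A_{\mu}^{R})^{-1}=D_{R,\mu}A^{-1}D_{R,\mu}^{-1}$ and $\|(A_{\mu}^{R})^{-1}\|\leq 2\rho$.

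The final step is to read off matrix elements: $(A^{-1})_{nm}=e^{\mu(g_{R}(n)-g_{R}(m))}\bigl((A_{\mu}^{R})^{-1}\bigr)_{nm}$. Specializing to $n=n_{0}$, so that $g_{R}(n_{0})=0$, and using $|\bigl((A_{\mu}^{R})^{-1}\bigr)_{n_{0}m}|\leq\|(A_{\mu}^{R})^{-1}\|\leq 2\rho$, one obtains $|(A^{-1})_{n_{0}m}|\leq 2\rho\,e^{-\mu\min(\|m-n_{0}\|,R)}$. Letting $R\to\infty$ and noting that $n_{0}\in\Lambda$ was arbitrary produces the claimed bound $|(A^{-1})_{nm}|\leq 2\rho e^{-\mu\|n-m\|}$ for every pair $n,m\in\Lambda$. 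The only real technical subtlety is that, when $A$ is unbounded, the natural untruncated multiplier $e^{-\mu\|\cdot-n_{0}\|}$ has an unbounded inverse; the role of the truncation parameter $R$ is to keep every operator in the chain bounded, and the $R$-independence of the Schur estimate makes the final passage to the limit harmless.
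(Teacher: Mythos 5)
Your proposal is correct and follows essentially the same route as the paper: both conjugate by an exponential multiplier centered at the reference index, bound the resulting perturbation of $A$ by a Schur/Holmgren row--column estimate, invert via the resolvent identity (Neumann series), and read off the decay of $(A^{-1})_{nm}$ from the norm bound on the conjugated inverse. Your truncation parameter $R$ (to keep the multiplier and its inverse bounded when $A$ is unbounded) and the implicit definition of $\mu_0$ via $C_0 S(\mu)\le \tfrac{1}{2\rho}$ are minor technical refinements of the paper's argument, which uses the untruncated weight $F_k$ and an explicit (slightly cruder) threshold; the substance is the same.
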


\medskip

\begin{rem}
The proof is an adaption of \cite[Appendix II]{Aizenman},  see also \cite[section 10.3]{aw1} and \cite{Invitation}.
\end{rem}

\medskip

\begin{proof}
1. For a real number $\mu>0 $ to be determined later and $k\in\Lambda $,  we define the multiplication operator
\begin{align}\label{eq:exp_multi1}
   F_{k} u(n)~=~e^{\mu\| n-k \|}u(n)\,.
\end{align}
We easily compute
\begin{align}
   \big({F_{k}}^{-1}AF_{k}\big)_{nm}~&=~e^{-\mu\|n-k  \|}\,A_{nm}\,e^{\mu\|m-k  \|} , \label{eq:conj_A1} 
   \end{align}
 and
 \beq
 \big({F_{k}}^{-1}A^{-1}F_{k}\big)_{nm}~=~e^{-\mu\|n-k  \|}\,(A^{-1})_{nm}\,e^{\mu\|m-k  \|}  .  \label{eq:conj_A2}
\eeq 
It follows from setting $n=k$ in \eqref{eq:conj_A2} that
\begin{align}\label{eq:ubAinv1}
   |(A^{-1})_{nm}|~&=~e^{-\mu\| n-m \|}\,|(F_{n}^{-1}A^{-1}F_{n})_{nm}| \nonumber \\
&\leq~ e^{-\mu\| n-m \|}\,\|(F_{n}^{-1}A^{-1}F_{n})\|
\end{align}
Furthermore, from the resolvent formula applied to $F_n^{-1}A F_n$ and $A$ we obtain
\begin{align} \label{eq:conj_Ainv1}
   (F_{n}^{-1}AF_{n})^{-1} = F_{n}^{-1}A^{-1}F_{n} = A^{-1}-(F_{n}^{-1}AF_{n})^{-1}(F_{n}^{-1}AF_{n}-A)A^{-1},
\end{align}
from which it follows that
\begin{align} \label{eq:conj_Ainv2}
   (F_{n}^{-1}A^{-1}F_{n})~=~A^{-1}\,\Big(1+(F_{n}^{-1}AF_{n}-A)A^{-1}\Big)^{-1}
\end{align}
provided the inverse of $(1+(F_{n}^{-1}AF_{n}-A)A^{-1})$ exists. This is the case if $\|(F_{n}^{-1}AF_{n}-A)A^{-1}  \|<1 $. 

\noindent
2. To prove that $\|(F_{n}^{-1}AF_{n}-A)A^{-1}  \|<1$, we first bound $\|F_{n}^{-1}AF_{n}-A  \|$.  From \eqref{eq:exp_multi1} and \eqref{eq:conj_A1}, and any $k \in \Lambda$, we estimate for 
\begin{align}
  \left| \sum_{m\in\Lambda}\Big((F_{n}^{-1}AF_{n})_{km}-A_{km}\Big) \right| &
\leq~\sum_{m\in\Lambda, m\not= n} \Big(e^{-\mu\|k-n  \|}e^{+\mu\|m-n  \|}-1\Big)\,|A_{kn}|\label{eq:estsum1}
\end{align}
Observing that $|e^{a}-1|=|a\int_{0}^{1} e^{at}dt|\leq |a| e^{|a|}$, and using the decay hypothesis \eqref{eq:exp_decayA}, we find an upper bound for \eqref{eq:estsum1}:
\begin{align}
\lefteqn{  \left| \sum_{m\in\Lambda}\Big((F_{n}^{-1}AF_{n})_{km}-A_{km}\Big) \right| } \nonumber \\
  &\leq~\mu \sum_{m\in\Lambda,m\not=n}\Big( | \| k-n \||-\| m-n \| | e^{\mu | \| k-n \||-\| m-n \| | }\Big)e^{-\gamma\| k-m \|}   \nonumber \\
&\leq~\mu\sum_{m\in\Lambda}\,\| k-m \| e^{-(\gamma-\mu)\| k-m \|} ,
\end{align}
where we used $\big|\| k-n \|-\|m-n \|\big|\leq\| k-m \| $.
Consequently, if $\mu$ satisfies
\begin{align}\label{eq:mu_bd1}
   \mu~\leq \mu_0 := \min \left( \frac{\gamma}{2}, \big(2\rho\sum_{m}\|m\|e^{\frac{\gamma}{2}\| m \|}\big)^{-1} \right),
\end{align}
then for all $m,k\in\Lambda$,
\begin{align}
   a_{1}~=~|\sum_{m\in\Lambda}\Big((F_{n}^{-1}AF_{n})_{km}-A_{km}\Big)|\,\rho~\leq~\frac{1}{2}  .
\end{align}
In the same way one proves
\begin{align}
   a_{2}~=~|\sum_{k\in\Lambda}\Big((F_{n}^{-1}AF_{n})_{km}-A_{km}\Big)|\,\rho~\leq~\frac{1}{2} ,
\end{align}
for all $n$ and $k $.
By the Holmgren bound \cite[Lemma C.3]{aghkh}, we conclude that
\begin{align}
   \|(F_{n}^{-1}AF_{n}-A)A^{-1}  \|~&\leq~\rho\,\|(F_{n}^{-1}AF_{n}-A) \|\\
    &\leq~a_{1}^{\frac{1}{2}}\,a_{2}^{\frac{1}{2}}~=~\frac{1}{2}  .
\end{align}

\noindent
3. It follows that $1+(F_{n}^{-1}AF_{n}-A)A^{-1} $ is invertible with
$$
\|( 1+(F_{n}^{-1}AF_{n}-A)A^{-1})^{-1}  \| \leq {2}.
$$
Consequently,  from \eqref{eq:ubAinv1}, the assumed bound on $\| A^{-1} \| \leq \rho$, and \eqref{eq:conj_Ainv2}, we have
\begin{align}
   |(A^{-1})_{nm}|~&\leq~ 2 \rho\,e^{-\mu\| n-m \|} , 
\end{align}
for any $\mu$ satisfying \eqref{eq:mu_bd1}.
\end{proof}

\medskip


\subsection{Application to the $\Gamma$-matrix: Exponential decay}
\label{subsec:Gamma2}

We apply Theorem \ref{thm:exp_decay1} to the matrix $\Gamma (z,\omega)$ on $\ell^2 (\Z^3)$ as defined in \eqref{eq:kernel2}. From Theorem \ref{thm:gamma_bd1} we have that
\beq\label{eq:gamma_lb1}
- \Ip ~\Gamma (E + i \kappa, \omega) \geq C(E) \kappa > 0 ,
\eeq
for $\kappa > 0$. The constant $C(E)$ is defined for $E > \pi^2$ by:
\beq\label{eq:gamma_cnst1}
C(E) = \frac{(2 \pi)^3}{(E - \pi^2)^2}.
\eeq
From Theorem \ref{thm:inverse}, the dissipative operator $- \Gamma(z,\omega)$, $\Ip ~ z = \kappa > 0$, is boundedly invertible and
\beq\label{eq:gamma_inv1}
\| \Gamma(z,\omega)^{-1} \| \leq  \frac{1}{ C(E) \kappa}.
\eeq
From \eqref{eq:decomp1}, and the Combes-Thomas estimate for the Green's function of $H_0$ as in \eqref{eq:green12}, 
we have for $i \neq j$,
\beq\label{eq:exp_decay1}
| \Gamma(z, \omega)_{ij}| \leq |G_0(i,j;z)| \leq \frac{C}{d(z)} e^{\kappa(z)\|i - j\|}.
\eeq
Hence, $\Gamma(z,\omega)$ satisfies the hypotheses of Theorem \ref{thm:exp_decay1}.
As a consequence, we have
\beq\label{eq:gamma_inv2}
| ( \Gamma(z,\omega)^{-1})_{nm} | \leq  \frac{2}{ C(E) \kappa} e^{- \mu \| n - m\|}.
\eeq
where the constant $C(E)$ is given in \eqref{eq:gamma_cnst1} and for any $\mu$ satisfying \eqref{eq:mu_bd1}.



\subsection{Proof of Theorem \ref{thm:ct_1} }\label{subsec:ct_main1}

We use the exponential decay estimate \eqref{eq:gamma_inv2} to prove the boundedness of the interaction term on the left side of \eqref{eq:interact_term1}, proving the first part of Theorem \ref{thm:ct_1}.
If $x \in \R^3 \backslash \Z^3$, with $d(x)={\rm dist} ~(x, \Z^3)  > 0$, it follows from \eqref{eq:green11} that the Green's function $G_0$ satisfies
\beq\label{eq:green12}
| G_0(x,m; z)|  \leq \frac{1}{4 \pi d(x)} e^{- \tau(z) \|x-m \|}, ~~~ m \in \Z^3  ,
\eeq
where $\tau := \Ip ~ \sqrt{ E + i \kappa}$, for $z=E+i\kappa$ wih $\kappa > 0$.
In addition, we have the following lemma.

\begin{lemma}\label{lemma:double_sum1}
Suppose two arrays $\{ a_{mn} \}$ and $\{b_{mn} \}$, for $m,n \in \Z^3$, satisfy the exponential decay bounds
\beq\label{eq:bdd1}
| x_{mn}| \leq C e^{- \gamma \|n - m \|}  ,   ~~~x = a, {\rm and} ~~ x=b,
\eeq
for finite constants $C > 0$ and $\gamma > 0$.  Then, if $c_{mn} := \sum_{j \in \Z^3} a_{mj} b_{jn}$, there is a finite constant $\widetilde{C} > 0$ so that
\beq\label{eq:bdd1}
| c_{mn}| \leq \widetilde{C} e^{- \frac{\gamma}{2} \|n - m \|}  \,.
\eeq
$\widetilde{C}$ can be chosen as  $C^2 \gamma^{-3}$.
\end{lemma}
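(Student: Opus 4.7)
The plan is to prove this by the standard ``splitting'' trick applied to the convolution sum, using the triangle inequality on the $\ell^{2}$-type norms. First, I would estimate the absolute value of $c_{mn}$ by bringing the absolute value inside the sum and applying the two hypotheses to obtain
\begin{equation*}
|c_{mn}| \,\leq\, \sum_{j \in \Z^{3}} |a_{mj}|\,|b_{jn}| \,\leq\, C^{2} \sum_{j \in \Z^{3}} e^{-\gamma(\|m-j\| + \|j-n\|)}.
\end{equation*}
The key observation is then the triangle inequality $\|m-j\| + \|j-n\| \geq \|m-n\|$, which I would exploit by splitting the exponent in half:
\begin{equation*}
e^{-\gamma(\|m-j\| + \|j-n\|)} \,=\, e^{-\frac{\gamma}{2}(\|m-j\| + \|j-n\|)}\,\cdot\, e^{-\frac{\gamma}{2}(\|m-j\| + \|j-n\|)} \,\leq\, e^{-\frac{\gamma}{2}\|m-n\|}\,e^{-\frac{\gamma}{2}(\|m-j\| + \|j-n\|)}.
\end{equation*}
This lets me pull the desired factor $e^{-\frac{\gamma}{2}\|m-n\|}$ out of the sum.

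Next I would bound the remaining sum in $j$. Since $\|j-n\| \geq 0$, I drop that term in the exponent to obtain the simple estimate
\begin{equation*}
\sum_{j \in \Z^{3}} e^{-\frac{\gamma}{2}(\|m-j\| + \|j-n\|)} \,\leq\, \sum_{j \in \Z^{3}} e^{-\frac{\gamma}{2}\|m-j\|} \,=\, \sum_{k \in \Z^{3}} e^{-\frac{\gamma}{2}\|k\|},
\end{equation*}
after the change of variables $k = j-m$. The last sum is translation-invariant, hence independent of $m, n$, and it is comparable to the integral $\int_{\R^{3}} e^{-\frac{\gamma}{2}\|x\|}\,d^{3}x = \frac{64\pi}{\gamma^{3}}$, so it is bounded by a constant of order $\gamma^{-3}$. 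Combining these pieces yields $|c_{mn}| \leq \widetilde{C}\, e^{-\frac{\gamma}{2}\|m-n\|}$ with $\widetilde{C}$ of order $C^{2}\gamma^{-3}$, as claimed.

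There is no real obstacle in this argument; the only subtlety is the symmetric splitting of the exponent, which is what gives exponential decay at rate $\gamma/2$ rather than $\gamma$. One loses a factor of two in the rate because part of the decay is spent controlling the convolution sum in $j$. A cleaner book-keeping of constants could be achieved by comparing the lattice sum directly to a Gaussian-type integral, but for the statement as written the rough bound above suffices.
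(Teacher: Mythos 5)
Your proof is correct and slightly cleaner than the paper's. Both arguments begin the same way, but diverge at the key step. The paper handles the convolution sum by splitting the index set $\Z^{3}$ into two regions according to whether $\|j-(n-m)\|$ is larger or smaller than $\tfrac{1}{2}\|m-n\|$; in each region one of the two exponential factors supplies the decay $e^{-\gamma\|m-n\|/2}$, and the other is summed to produce the $\gamma^{-3}$ factor. You instead apply the triangle inequality once to the full exponent and split it symmetrically, writing $e^{-\gamma(\|m-j\|+\|j-n\|)} \le e^{-\frac{\gamma}{2}\|m-n\|}\,e^{-\frac{\gamma}{2}\|m-j\|}$, which avoids any case analysis and yields the same constant $\sum_{k}e^{-\frac{\gamma}{2}\|k\|} \sim \gamma^{-3}$. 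The two routes are equivalent in substance and produce the same bound; your version trades the paper's geometric dichotomy for a single inequality, and is arguably the more transparent book-keeping. (One small caveat common to both: the explicit choice $\widetilde{C} = C^{2}\gamma^{-3}$ is really an order-of-magnitude statement valid for small $\gamma$, since for $\gamma$ large the lattice sum is $O(1)$ rather than $O(\gamma^{-3})$; this does not affect the lemma's use in the paper.)
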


\begin{proof}
By direct calculation, we obtain
\bea\label{eq:prod_sum1}
|c_{mn}| &\leq&  C^2 \sum_{j \in \Z^3} e^{- {\gamma} \|j \|}  e^{- {\gamma} \|j - (n - m) \|}   \nonumber \\
 & \leq & C^2 \sum_{j \in \Z^3; \|j-(m-n)\| \geq \frac{\|m-n\|}{2}} e^{- {\gamma} 
 \|j\|}  e^{- {\gamma} \|j - (n - m)\|}
  \nonumber \\
  & & +~  C^2 \sum_{j \in \Z^3;\|j-(m-n)\| < \frac{\|m-n\|}{2} } e^{- {\gamma} \|j\|}  e^{- {\gamma} \|j - (n - m)\|}
  \nonumber \\
   & = & S_1 + S_2 .
   \eea
The sum $S_1$ is easily seen to be bounded as required and the constant
$\gamma^{-3}$ comes from $\sum_{j \in \Z^3} e^{- \gamma \|j\|}$. As for $S_2$, we note
\beq\label{eq:prod_sum2}
\|j\| = \|(m-n)-((m-n)-j)\| \geq \|j\| - \|(m-n)-j\|  \geq \frac{\|m-n\|}{2} .
\eeq
This establishes the lemma.
\end{proof}

In order to apply this to the Green's functions, we note the following proposition.

\begin{prop}\label{prop:green1} Let $C_0 := \{ x \in \R^3 ~|~ \|x_i\| \leq \frac{1}{2} \}$. For $\tau := \Ip ~ \sqrt{ E + i \kappa}$, for $\kappa > 0$ , we have

\begin{enumerate}

\item For any $x \in n + C_0$, with $d(x)={\rm dist} ~(x, \Z^3)  > 0$, and $m \in \Z^3$,
\beq\label{eq:green2}
| G_0(x,m; E+i\kappa)| \leq  \frac{ e^{\frac{1}{2} \tau}}{d(x)}\, e^{- \tau \|n-m\|}
\leq  \frac{ e^{\tau}}{d(x)}\, e^{- \tau \|x-m\|} .
\eeq

\medskip

\item
\beq\label{eq:green_ave1}
\int_{C_0}  | G_0(x+n,m; E+i\kappa)|~d^3 x  \leq 2 \pi  e^{\frac{1}{2} \tau}  e^{- \tau \|n-m \|}.
\eeq
\end{enumerate}

\end{prop}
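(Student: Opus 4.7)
The plan is to proceed directly from the explicit kernel
$$G_0(x,m;E+i\kappa) \;=\; \frac{1}{4\pi}\,\frac{e^{i\sqrt{E+i\kappa}\,\|x-m\|}}{\|x-m\|},$$
so that, writing $\tau = \Ip \sqrt{E+i\kappa}$, one has $|G_0(x,m;E+i\kappa)| = \dfrac{e^{-\tau\|x-m\|}}{4\pi\|x-m\|}$. From here everything is geometry on the cube $C_0$.

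For part (1), I would bound the denominator by $\|x-m\| \geq d(x)$, which is immediate from the definition $d(x) = \inf_{k\in\Z^3}\|x-k\|$ and the fact that $m\in\Z^3$. To transfer the exponential decay from $\|x-m\|$ to $\|n-m\|$, apply the reverse triangle inequality $\|x-m\| \geq \|n-m\| - \|x-n\|$ and use that $x \in n+C_0$ forces $\|x-n\| \leq \tfrac{1}{2}$ (in the sup-norm appropriate to $C_0$). This produces $e^{-\tau\|x-m\|} \leq e^{\tau/2}\,e^{-\tau\|n-m\|}$, and since $1/(4\pi) \leq 1$ the stated inequality follows. The second inequality is the mirror image: the forward triangle inequality gives $\|n-m\| \geq \|x-m\| - \|x-n\| \geq \|x-m\|-\tfrac{1}{2}$, hence $e^{-\tau\|n-m\|} \leq e^{\tau/2} e^{-\tau\|x-m\|}$, and combining with the first inequality yields the second.

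For part (2), I would change variables $y = x+n$ so the integral becomes $\int_{n+C_0} |G_0(y,m;E+i\kappa)|\,d^3y$, and split into two cases according to whether $m=n$ or $m\neq n$. When $m=n$, translation reduces the integral to $\int_{C_0} \frac{e^{-\tau\|v\|}}{4\pi\|v\|}\,d^3v$; the singularity is integrable in $\R^3$, and passing to spherical coordinates gives a bound by $\int_0^{\sqrt{3}/2} r\,dr$ times a constant, which is bounded by $2\pi e^{\tau/2}$ (with lots of room). When $m\neq n$, one has $\|n-m\| \geq 1$ because both are lattice points, so for $y \in n+C_0$ the reverse triangle inequality gives $\|y-m\| \geq \|n-m\| - \tfrac{1}{2} \geq \tfrac{1}{2}$; this kills the singularity and yields $1/\|y-m\| \leq 2$. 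Combining with $e^{-\tau\|y-m\|} \leq e^{\tau/2} e^{-\tau\|n-m\|}$ and the volume estimate $|C_0| = 1$ produces the claimed bound.

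The only point requiring a moment of care is the Coulomb singularity in the case $m=n$ of part (2), which is handled by the local integrability of $1/\|v\|$ in three dimensions. Beyond that, the entire proof is triangle-inequality bookkeeping, and the explicit constants $e^{\tau/2}$ and $2\pi$ in the statement are loose enough to absorb the geometric factors (such as the $1/(4\pi)$ prefactor in $G_0$, or the discrepancy between the Euclidean diameter $\sqrt{3}/2$ and the sup-norm diameter $1/2$ of $C_0$) without further work.
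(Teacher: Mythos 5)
Your approach---write out the explicit kernel $|G_0(x,m;E+i\kappa)| = e^{-\tau\|x-m\|}/(4\pi\|x-m\|)$, bound the denominator by $d(x)$, and transfer the exponential decay via the triangle inequality---is the right one; the paper treats the proposition as elementary and supplies no separate proof. But the constants do not close as written. The cube $C_0$ has Euclidean radius $\sqrt{3}/2$, not $1/2$ (a corner of $C_0$ sits at Euclidean distance $\sqrt{3}/2$ from the origin), and since the norm in $G_0$ and on both sides of \eqref{eq:green2} is Euclidean, the triangle inequality gives only $\bigl|\,\|x-m\| - \|n-m\|\,\bigr| \le \|x-n\| \le \sqrt{3}/2$, hence $e^{-\tau\|x-m\|} \le e^{\sqrt{3}\tau/2}\,e^{-\tau\|n-m\|}$. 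Your claim that the $1/(4\pi)$ prefactor absorbs the gap holds only for bounded $\tau$: one needs $e^{\sqrt{3}\tau/2}/(4\pi) \le e^{\tau/2}$, which fails once $\tau > 2\ln(4\pi)/(\sqrt{3}-1) \approx 7$. Worse, the second inequality in part (1) amounts to $\|x-m\| - \|n-m\| \le 1/2$ with no $4\pi$ slack at all, and this is simply false when $x$ is near a corner of $n + C_0$. What your argument honestly yields is $e^{\sqrt{3}\tau/2}$ (resp.\ $e^{\sqrt{3}\tau}$) in place of $e^{\tau/2}$ (resp.\ $e^{\tau}$). This is cosmetic---it changes a numerical constant and has no effect on the use of the proposition in Theorem~\ref{thm:ct_1}---but you should state the corrected constant rather than assert the published one is ``loose enough.''

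Part (2) has a parallel slip. You infer $\|y-m\| \ge \|n-m\| - 1/2 \ge 1/2$ from the same over-optimistic bound $\|y-n\|\le 1/2$, whereas the Euclidean triangle inequality only gives $\|y-m\| \ge 1 - \sqrt{3}/2 \approx 0.13$. The conclusion $\|y-m\| \ge 1/2$ for $y \in n+C_0$ and $m \in \Z^3\setminus\{n\}$ is nevertheless correct, but by a coordinate-wise argument: $m \ne n$ forces $|m_i - n_i| \ge 1$ for some $i$, and then $|y_i - m_i| \ge 1 - 1/2 = 1/2$, so $\|y-m\|_2 \ge 1/2$. With that fix, and with $e^{\sqrt{3}\tau/2}$ replacing $e^{\tau/2}$, the rest of your handling of part (2)---splitting on $m=n$ versus $m\ne n$, integrating the Coulomb singularity in spherical coordinates on the first branch, bounding $1/\|y-m\|$ by $2$ on the second---is sound.
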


Part (1) of Theorem \ref{thm:ct_1} follows from Lemma \ref{lemma:double_sum1} and the first part of Proposition \ref{prop:green1}. Part 2 of Theorem \ref{thm:ct_1} follows similarly from Lemma \ref{lemma:double_sum1} and part 2 of Proposition \ref{prop:green1}.


\section{Some basic resolvent estimates}\label{app:resolvent1}
\setcounter{equation}{0}

For any $E \in \R$ and any $\delta > 0$, we define the interval $I_\delta$ by
\beq\label{eq:interval1}
I_\delta := [ E - \delta, E + \delta].
\eeq
We denote by $I_\delta^c := \R \backslash I_\delta$. Let $H$ be a self-adjoint operator on a Hilbert space $\mathcal{H}$ with domain $D(H) \subset \mathcal{H}$.
For an interval $I \subset \R$, we let $P_I(H) = P_I$ denote the spectral projector for $H$ and interval $I$.

\medskip

\begin{lemma}\label{lemma:proj_bd1}
Let $H$ be a self-adjoint operator on a Hilbert space $\mathcal{H}$ and let $\psi \in D(H) \subset \mathcal{H}$. Suppose that there exists a finite $a > 0$ so that
\beq\label{eq:hyp1}
\| (H - E)\psi \| \leq a .
\eeq
We then have
\beq\label{eq:proj_bd1}
\| P_{I_\delta^c} \psi \| \leq \frac{a}{\delta}.
\eeq
\end{lemma}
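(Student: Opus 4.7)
The plan is to invoke the spectral theorem for the self-adjoint operator $H$ and compare the norm $\|(H-E)\psi\|$ to the norm of the projection onto $I_\delta^c$. The inequality is essentially a quantitative statement that approximate eigenvectors of $H$ with approximate eigenvalue $E$ must have most of their spectral mass close to $E$.

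First, I would write, via the spectral theorem, the identities
\begin{align*}
   \|(H-E)\psi\|^2 ~&=~ \int_\R (\lambda - E)^2 \, d\langle \psi, P_{(-\infty,\lambda]}(H)\psi\rangle, \\
   \|P_{I_\delta^c}\psi\|^2 ~&=~ \int_{I_\delta^c} d\langle \psi, P_{(-\infty,\lambda]}(H)\psi\rangle.
\end{align*}
The spectral measure $d\langle\psi,P_{(-\infty,\lambda]}(H)\psi\rangle$ is a positive finite Borel measure on $\R$, and the hypothesis \eqref{eq:hyp1} gives a finite upper bound $a^2$ on the first integral.

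Next, I would restrict the first integral to the set $I_\delta^c = \R \setminus [E-\delta, E+\delta]$. On this set, by definition $|\lambda - E| \geq \delta$, so $(\lambda-E)^2 \geq \delta^2$. Therefore
\begin{equation*}
   a^2 ~\geq~ \|(H-E)\psi\|^2 ~\geq~ \int_{I_\delta^c} (\lambda-E)^2 \, d\langle \psi, P_{(-\infty,\lambda]}(H)\psi\rangle ~\geq~ \delta^2 \int_{I_\delta^c} d\langle \psi, P_{(-\infty,\lambda]}(H)\psi\rangle ~=~ \delta^2 \|P_{I_\delta^c}\psi\|^2.
\end{equation*}
Dividing by $\delta^2$ and taking square roots yields \eqref{eq:proj_bd1}.

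There is no real obstacle here; the argument is a one-line Chebyshev-type estimate on the spectral measure. The only thing to verify is that $\psi \in D(H)$ makes the spectral integral $\int(\lambda-E)^2\,d\langle\psi,P_{(-\infty,\lambda]}(H)\psi\rangle$ finite and equal to $\|(H-E)\psi\|^2$, which is the standard characterization of the domain of $H$ via the spectral theorem. The estimate is sharp in the sense that equality is approached when the spectral mass of $\psi$ outside $I_\delta$ is concentrated exactly at the endpoints $E\pm\delta$.
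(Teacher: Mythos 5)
Your proof is correct and is essentially the same Chebyshev-type spectral estimate as the paper's: both express $\|P_{I_\delta^c}\psi\|^2$ and $\|(H-E)\psi\|^2$ as integrals against the spectral measure $d\mu_\psi^H$ and use $(\lambda-E)^2 \geq \delta^2$ on $I_\delta^c$. No substantive differences.
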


\medskip

\begin{remark} Note that the constant $a$ depends on $\| \psi \|$.
\end{remark}

\begin{proof}
By the spectral theorem for $H$, we have
\bea\label{eq:norm1}
\| P_{I_\delta^c} \psi \|^2  & = & \int_\R \chi_{I_\delta^c} (s) d \mu_\psi^H (s)
  \nonumber \\
   & \leq & \frac{1}{\delta^2} \int_\R \chi_{I_\delta^2} (s) ( s - E)^2 d \mu_\psi^H (s)  \nonumber \\
    & \leq & \frac{1}{\delta^2} \| (H - E) \psi \|^2 ,
\eea
so the result follows from this and the bound \eqref{eq:hyp1}.
\end{proof}

\medskip

\begin{lemma}\label{lemma:projector_bd2}
Let $\epsilon > 0$. Under the hypotheses of Lemma \ref{lemma:proj_bd1}, we have
\beq\label{eq:proj_bd2}
\| R_H(E + i \epsilon) P_{I_\delta^c} \psi \| \leq \frac{a}{\delta^2} .
\eeq
\end{lemma}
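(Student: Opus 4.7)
The plan is to use the spectral theorem for $H$ together with the conclusion of Lemma \ref{lemma:proj_bd1}. Set $\phi := P_{I_\delta^c}\psi$. Because $P_{I_\delta^c}$ and $R_H(E+i\epsilon)$ are both bounded Borel functions of $H$, they commute, and the spectral measure $\mu_\phi^H$ associated with $\phi$ is supported in $I_\delta^c$. Consequently,
\[
\|R_H(E+i\epsilon)\phi\|^{2} \;=\; \int_{I_\delta^c} \frac{1}{(s-E)^{2}+\epsilon^{2}}\,d\mu_\phi^H(s) \;\leq\; \frac{1}{\delta^{2}}\,\|\phi\|^{2},
\]
where I used that $(s-E)^{2}+\epsilon^{2} \geq (s-E)^{2} \geq \delta^{2}$ on $I_\delta^c$. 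Combining this with the bound $\|\phi\| \leq a/\delta$ from Lemma \ref{lemma:proj_bd1} and taking square roots yields the desired estimate $\|R_H(E+i\epsilon) P_{I_\delta^c}\psi\| \leq a/\delta^{2}$.

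Alternatively, one can derive the estimate in a single spectral integral, bypassing Lemma \ref{lemma:proj_bd1} as an intermediate step. On the set $I_\delta^c$ one has the pointwise inequality $(s-E)^{-2} \leq (s-E)^{2}/\delta^{4}$, and so
\[
\|R_H(E+i\epsilon) P_{I_\delta^c}\psi\|^{2} \;\leq\; \int_{I_\delta^c} \frac{(s-E)^{2}}{\delta^{4}}\,d\mu_\psi^H(s) \;\leq\; \frac{1}{\delta^{4}}\,\|(H-E)\psi\|^{2} \;\leq\; \frac{a^{2}}{\delta^{4}}.
\]
This is merely a repackaging of the same calculation and makes transparent that the two powers of $\delta^{-1}$ come respectively from the resolvent bound on $\mathrm{ran}\,P_{I_\delta^c}$ and from the Chebyshev-type projection bound \eqref{eq:proj_bd1}. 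There is no real obstacle here; the only mild point is to note that the $\epsilon^{2}$ term in the denominator can be discarded, so that the estimate is uniform in $\epsilon>0$.
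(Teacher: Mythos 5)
Your first argument is essentially the paper's proof: spectral representation of $\|R_H(E+i\epsilon)P_{I_\delta^c}\psi\|^2$, the pointwise bound $(s-E)^2+\epsilon^2\geq\delta^2$ on $I_\delta^c$, then Lemma \ref{lemma:proj_bd1}; the only difference is that you keep the $\epsilon^2$ term explicitly before discarding it, which is slightly cleaner than the paper's silent drop. Your alternative single-integral derivation is just a repackaging of the same two estimates, so there is nothing genuinely different here.
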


\begin{proof}  We use the spectral representation as in \eqref{eq:norm1} modified by the insertion of the resolvent. Since $(s - E)^{-2} \leq \delta^{-2}$, for $s \in I_\delta^c$, as follows from \eqref{eq:interval1}, we obtain
\bea\label{eq:norm2}
\| R_H(E) P_{I_\delta^c} \psi \|^2  & = & \int_\R \chi_{I_\delta^c} (s) (s-E)^{-2} d \mu_\psi^H (s)
  \nonumber \\
   & \leq & \frac{1}{\delta^2} \int_\R \chi_{I_\delta^2} (s) d \mu_\psi^H (s)  \nonumber \\
    & \leq & \frac{1}{\delta^2} \| P_{I_\delta}  \psi \|^2  \nonumber \\
      & \leq & \frac{a^2}{\delta^4} ,
\eea
by Lemma \ref{lemma:proj_bd1}.
\end{proof}

\medskip

\end{appendices}


\end{document}